\documentclass[11pt]{article}
\pdfoutput=1
\usepackage[margin=1in]{geometry}

\usepackage{dsfont}
\usepackage{amsmath,amsfonts,amssymb,amsthm}
\usepackage{thm-restate}
\usepackage{mathtools}
\usepackage{enumitem}
\usepackage{microtype}
\usepackage{xcolor}
\usepackage{comment}
\usepackage{braket} 

\usepackage[nobiblatex]{xurl} 
\usepackage[pagebackref]{hyperref}
\hypersetup{
    pdftitle={}, 
    pdfauthor={}, 
    colorlinks=true, 
    linkcolor=blue, 
    citecolor=blue, 
    urlcolor=blue 
}

\renewcommand{\backref}[1]{}

\renewcommand{\backrefalt}[4]{%
\ifcase #1 %
\or
[p.\ #2]%
\else
[pp.\ #2]%
\fi}

\makeatletter
\newcommand{\para}{%
  \@startsection{paragraph}{4}%
  {\z@}{2ex \@plus 3.3ex \@minus .2ex}{-1em}%
  {\normalfont\normalsize\bfseries}%
}
\makeatother

\usepackage[nameinlink]{cleveref}
\crefname{equation}{Eq.}{Eqs.}
\Crefname{equation}{Equation}{Equations}

\newtheorem{theorem}{Theorem}
\newtheorem{lemma}[theorem]{Lemma}
\newtheorem{proposition}[theorem]{Proposition}

\theoremstyle{definition}

\newcommand{\abs}[1]{| #1 |}

\newcommand{\defeq}{\coloneqq}

\newcommand{\XOR}{\mathrm{XOR}}
\newcommand{\LR}{\mathrm{LR}}
\newcommand{\eps}{\epsilon}
\newcommand{\Dtr}{D_{\mathrm{tr}}}

\newcommand{\id}{\mathds{1}} 

\newcommand{\norm}[1]{\|{#1}\|}

\DeclareMathOperator{\polylog}{polylog}

\begin{document}

\title{An Optimal Quantum Linear Systems Algorithm}

\author{
Carlos Bravo-Prieto\footnote{Dahlem Center for Complex Quantum Systems,
Freie Universität Berlin.}
\and
Aram W. Harrow\footnote{Center for Theoretical Physics -- a Leinweber Institute, MIT.} 
\and
Robin Kothari\footnote{Google Quantum AI. Some of this work was done when the author was a postdoctoral associate at MIT.}
}
\date{\vspace{-2em}}

\maketitle

\begin{abstract}
In the quantum linear systems problem (QLSP), we are given query access to a $d$-sparse $N\times N$ matrix $A$ with condition number $\kappa$, and the ability to prepare a quantum state proportional to a vector $\vec b$. The goal is to output an $\eps$-approximation to the quantum state proportional to the solution $\vec x$ of $A\vec{x}=\vec{b}$. 
Following a long line of work, the best previously known quantum algorithms for the QLSP had query complexities $O(\kappa d\log(1/\eps))$~\cite{CAS21} and $\kappa\sqrt d(\kappa d/\eps)^{o(1)}$~\cite{Low19}, while the best known lower bounds were $\Omega(\kappa\log(1/\eps))$~\cite{CABB25} and $\Omega(\kappa\sqrt d)$~\cite{MKBR26}. 

We improve these bounds and show that the complexity of the QLSP is
$\Theta(\kappa\sqrt d\log(1/\eps))$.  
We also resolve an open problem of Berry and Childs~\cite{BC12} by showing that any $N\times N$ unitary can be implemented with bounded error using $O(\sqrt N)$ queries to its matrix entries.
\end{abstract}


\section{Introduction}
\label{sec:intro}

The quantum linear systems problem (QLSP) is a central algorithmic primitive in quantum computing, and was introduced by Harrow, Hassidim, and Lloyd~\cite{HHL09}. 
Informally, it is the problem of solving a linear system of equations $A \vec{x} = \vec{b}$, but where we are given quantum access to the inputs $A$ and $b$, and only asked to approximately prepare a quantum state proportional to $\vec{x}$.

\para{Problem statement.} Let $A$ be an invertible $d$-sparse $N\times N$ matrix satisfying $\norm{A}\leq1$ and $\norm{A^{-1}}\leq\kappa$, where $\kappa$ is known. Let $\vec b=(b_1,\ldots,b_N)\neq0$ and $\vec x=(x_1,\ldots,x_N)=A^{-1}\vec b$, and define
\begin{equation}
  \ket b
  \defeq
  \frac{\sum_{i=1}^N b_i\ket i}
       {\norm{\sum_{i=1}^N b_i\ket i}},
  \qquad
  \ket x
  \defeq
  \frac{\sum_{i=1}^N x_i\ket i}
       {\norm{\sum_{i=1}^N x_i\ket i}}.
\end{equation}
The goal is to output a quantum state together with a flag indicating success, with success probability at least $1/2$. Conditioned on success, the output state must have trace distance at most $\eps$ from $\ket x$.

The matrix $A$ is accessed through the standard sparse-matrix oracles: a location oracle and a value oracle. On input a row index $j\in[N]$ and an index $\ell\in[d]$, the location oracle maps $\ket{j,\ell}$ to $\ket{j,\nu(j,\ell)}$, returning the position $\nu(j,\ell)$ of the $\ell$th potentially nonzero entry in row $j$. It must list every nonzero entry in the row, but it may also list zero entries because $d$ is only an upper bound on the sparsity. The value oracle maps $\ket{j,k,z}$ to $\ket{j,k,z\oplus A_{jk}}$. We assume access to the corresponding oracles for both $A$ and $A^\dagger$. The state $\ket b$ is supplied by an efficient state-preparation circuit, and query complexity counts only queries to the matrix oracles.

Here we have used the notation and access model of Childs, Kothari, and Somma~\cite{CKS17}, and refer the reader to it for a discussion of why the problem is formulated in this way.

\para{Prior work.} The query complexity of the QLSP is governed by three parameters: the condition number $\kappa$, the sparsity $d$, and the target accuracy $\eps$~\cite{reviewQLSP}. Harrow, Hassidim, and Lloyd gave a quantum algorithm for QLSP with query complexity $\widetilde{O}(\kappa^2/\eps)$ with polynomial scaling in $d$~\cite{HHL09}. The $\kappa$ dependence of this algorithm was then improved to nearly linear by Ambainis~\cite{Amb12}, and the $\eps$ dependence was later improved to $\polylog(1/\eps)$ by Childs, Kothari, and Somma~\cite{CKS17} resulting in an algorithm with complexity $O(\kappa d \polylog(d\kappa/\eps))$.

Later, several algorithms were proposed using ideas from adiabatic quantum computing~\cite{SSO19,AL19,LT20}, culminating in the result of Costa, An, Sanders, Su, Babbush, and Berry~\cite{CAS21} with query complexity 
\begin{equation}
O(\kappa d \log(1/\eps)).    
\end{equation}
Using an improved Hamiltonian simulation algorithm, Low~\cite{Low19} presented a QLSP algorithm with query complexity 
\begin{equation}
\kappa \sqrt{d} (\kappa d/\eps)^{o(1)}.    
\end{equation}
These are incomparable, since the latter algorithm has better $d$ dependence but the former algorithm has better $\kappa$ and $\eps$ dependence. These are the best known quantum algorithms for the QLSP. 

Prior lower bounds also captured only part of the story. The known lower bounds were
\begin{equation}
    \Omega(\kappa \log(1/\eps))
\end{equation}
by Costa, An, Babbush, and Berry~\cite{CABB25}, with a joint dependence on the condition number and precision, and 
\begin{equation}
  \Omega(\kappa\sqrt{d})  
\end{equation}
by  Mori, Kikuchi, Benedetti, and Rosenkranz~\cite{MKBR26}, with constant error.
As usual, all multiparameter lower bounds are understood to hold for sufficiently large values of $\kappa$, $d$, and $1/\eps$.

\para{Main results.} We close this gap by proving matching upper and lower bounds.

\begin{theorem}[Optimal query complexity of the QLSP]
\label{thm:main}
The query complexity of the QLSP is
\begin{equation}
  \Theta(\kappa\sqrt d\log(1/\eps)).
\end{equation}
\end{theorem}

We prove the upper bound in \Cref{sec:upper-bound} with a matching lower bound in \Cref{sec:lower_bound}.

\para{Bonus results.}
We also obtain an improved block encoding of a sparse matrix itself.
A block encoding represents a matrix as a block of a larger unitary; the parameters $(\alpha,\eta)$ specify its normalization and operator-norm error, respectively.

\begin{restatable}[Square-root-sparsity block encoding]{theorem}{sqrtdblockencoding}
\label{thm:intro-sqrt-d-block-encoding}
\label{thm:sqrt-d-block-encoding}
Let $H$ be a Hermitian $d$-sparse matrix with
$\norm H\leq1$.  For every $\eta>0$, we can construct an $(O(1),\eta)$ block encoding of $H$ using
$O(\sqrt d\log(1/\eta))$ sparse-matrix oracle queries.
\end{restatable}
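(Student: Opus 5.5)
Our plan is to start from the standard sparse-access block encoding, which has normalization $d$, and boost it by a polynomial function to normalization $O(\sqrt d)$. We then reach constant normalization using a Hamiltonian-simulation/qubitization-style trick exploiting $\norm H\le 1$. Concretely, the standard construction of~\cite{CKS17} gives, with $O(1)$ queries, a unitary $U$ with $(\langle 0|\otimes I)U(|0\rangle\otimes I)=H/d$. Equivalently, via the Szegedy/qubitization walk $W$, the eigenphases $\theta_\lambda$ satisfy $\cos\theta_\lambda=\lambda/d$. The goal is to implement a block encoding of $H$ itself, i.e.\ of $f(x)=dx$ restricted to $|x|\le 1/d$, with constant normalization.

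\textbf{Step 1: reduce to a polynomial approximation problem.} By quantum singular value transformation (QSVT), a degree-$k$ polynomial $p$ with $|p(x)|\le 1$ on $[-1,1]$ can be applied to $H/d$ with $k$ uses of $U$. We need $p(x)\approx c\,d x$ for $|x|\le 1/d$, with error $\eta$, for some constant $c$. The constraint is that $p$ stays bounded by $1$ on all of $[-1,1]$, since $H/d$ might in principle have spectrum anywhere there from the algebra's viewpoint. The spectrum of $H/d$ lies in $[-1/d,1/d]$, so this is amplification of a small interval by a factor $d$.

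\textbf{Step 2: the obstacle.} Markov's and Bernstein's inequalities say that a bounded polynomial with slope $\Theta(d)$ near $0$ requires degree $\Omega(d)$. Bernstein's inequality gives $|p'(0)|\le k$, so QSVT alone yields only $O(d\log(1/\eta))$. This is exactly the $\kappa d$ barrier. The $\sqrt d$ improvement must therefore come from a different encoding in which $H$ appears with normalization $\sqrt d$. The key idea I would pursue is to factor $H=L^\dagger R$, where $L,R$ are isometry-like objects whose row norms are bounded by $\sqrt d\cdot\max|H_{jk}|^{1/2}$. A more promising route is to use the fact that $\norm H\le 1$ forces each row to have $\ell_2$ norm at most $1$. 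We would build a state-preparation unitary $|j\rangle\mapsto\sum_k H_{jk}|k\rangle$ that is exactly normalized up to constant, so that the block encoding has normalization governed by the row $\ell_2$ norms (which are $\le 1$) rather than the $\ell_1$ norms. Preparing an arbitrary $d$-dimensional state from oracle access to its amplitudes costs $O(\sqrt d)$ queries via amplitude amplification: load amplitudes $H_{jk}/\max$ uniformly over $d$ positions, and amplify. The uniform-superposition success amplitude is $\norm{H_j}/\sqrt d$.

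\textbf{Step 3: handle varying row norms, and the error.} Amplitude amplification needs a known success probability. Rows have different norms $\norm{H_j}\le 1$, so we would pad each row with an ancilla "slack" component of amplitude $\sqrt{1-\norm{H_j}^2}$, making the total norm exactly $1$. Computing this slack requires the row norm, which is unknown. To avoid that, I would instead use fixed-point or oblivious amplification with QSVT. Apply a degree-$O(\sqrt d\log(1/\eta))$ polynomial to the row-wise amplitude $a_j=\norm{H_j}/\sqrt d\in[0,1/\sqrt d]$ that approximates $\sqrt d\,a_j/2$ within $\eta$. This is the linear-amplification polynomial of Low–Chuang/Gilyén et al., which amplifies a block by factor $\Gamma=\sqrt d$ on $[0,1/(2\Gamma)]$ with degree $O(\Gamma\log(1/\eta))$. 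We would then combine left and right preparations $L,R$ with $L^\dagger R=H/c$ in the usual Childs-style product, giving an $(O(1),O(\eta))$ block encoding with $O(\sqrt d\log(1/\eta))$ queries. I expect the hardest part to be verifying that the amplification acts coherently and uniformly across rows, i.e.\ that the amplified isometry's inner products still reproduce $H_{jk}$ without row-dependent distortion. I would handle this by amplifying a block that is simultaneously singular-value-diagonal across all $j$.
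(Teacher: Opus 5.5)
\textbf{There is a genuine gap in Steps 2--3.} You assert that an $\ell_2$-normalized row preparation $\ket j\mapsto\ket j\sum_k H_{jk}\ket k$ yields a block encoding ``governed by the row $\ell_2$ norms.'' It does not, however cheaply that preparation is implemented.

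In the Childs-style product you invoke, the left and right columns have the index-retaining form $\ket j\otimes\ket{\ell_j}$ and $\ket{r_k}\otimes\ket k$ with $\norm{\ell_j},\norm{r_k}\leq1$. The encoded entry is therefore $\braket{j|r_k}\braket{\ell_j|k}$, so you need $H_{jk}/c=b_{jk}\overline{a_{jk}}$, where every row of $a$ and every column of $b$ has $\ell_2$ norm at most $1$.
\begin{itemize}
\item With your row state ($\abs{a_{jk}}=\abs{H_{jk}}$), this forces $\abs{b_{jk}}=1/c$ on the support. That is the uniform map $T$, so $c\geq\sqrt d$ for any column with $d$ nonzeros. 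This is exactly the paper's identity $F^\dagger T=H/\sqrt d$.
\item Pairing the row state with itself through a swap gives entries quadratic in $H_{jk}$, not $H_{jk}$.
\item The barrier does not depend on your choice of $a$. Take $H$ equal to $d^{-1/2}$ times a $d\times d$ Sylvester--Hadamard matrix, which is Hermitian, $d$-sparse, and has $\norm H=1$. Summing $\abs{H_{jk}}/c=\abs{a_{jk}}\abs{b_{jk}}$ over all $d^2$ entries and applying Cauchy--Schwarz gives $d^{3/2}/c\leq d$. So every such factorization has $c\geq\sqrt d$.
\end{itemize}

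Even granting Step 3, which implements $F$ with $O(\sqrt d\log(d/\eta))$ queries via uniform singular-value amplification, you then hold only an encoding of $H/\sqrt d$. By the same Bernstein argument from your Step 2, boosting it to constant normalization needs $\Omega(\sqrt d)$ further uses of it. That is $\Theta(d)$ queries overall, no better than the standard encoding. This is the obstruction the paper spells out in Section~2.1: $F$ costs $\Theta(\sqrt d)$, which is affordable once but not as the per-use cost of a $\sqrt d$-normalized encoding.

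\textbf{The missing idea.} Never implement $F$. Obtain the factor $\sqrt d$ from an \emph{inversion} whose normalization cancels it, rather than from an amplification. The paper proceeds as follows.
\begin{itemize}
\item It writes $H$ as a block of $G_H^{-1}$, where $G_H=\begin{pmatrix}0&\id\\ \id&-H\end{pmatrix}$ is $(d+1)$-sparse with $\norm{G_H},\norm{G_H^{-1}}\leq2$.
\item For such a well-conditioned sparse matrix $A$, it splits $F^\dagger T\vec u=\vec b$ into the two equations $\vec z=T\vec u$ and $F^\dagger\vec z=\vec b$.
\item It normalizes each state $\ket{\widetilde\psi_{ij}}$ separately. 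With $O(1)$ queries this gives an exact block encoding of an enlarged contraction $M$ with $\norm{M^{-1}}\leq B=O(\sqrt d)$, and with $\sqrt d\,A^{-1}$ as the bottom-right block of $M^{-1}$.
\item QSVT inversion with $O(\sqrt d\log(1/\eta))$ uses yields $\frac{3}{4B}M^{-1}$. In the selected block the $\sqrt d$ cancels, leaving $A^{-1}$, and hence $H$, with constant normalization.
\end{itemize}
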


In the black-box unitary implementation problem, we are given
oracle access to the entries of an $N\times N$ unitary $U$ and seek a
quantum circuit that approximates $U$. For constant error, Berry and
Childs~\cite{BC12} gave an algorithm using
$O(N^{2/3}(\log\log N)^{4/3})$ queries and asked whether the
$\Omega(\sqrt N)$ lower bound from unstructured search could be matched.
Low~\cite{Low19} improved the upper bound to $O(N^{1/2+o(1)})$.
Applying \Cref{thm:sqrt-d-block-encoding} to the Hermitian dilation of
$U$, we obtain the optimal $O(\sqrt N)$ bound for constant error,
resolving this open question.

\begin{restatable}[Black-box unitary implementation]{corollary}{blackboxunitary}
\label{cor:black-box-unitary-implementation}
Given oracle access to the entries of an $N\times N$ unitary $U$, and any $\eps>0$, we can implement $U$ to error at most $\eps$ using
$O(\sqrt N\log(1/\eps))$ queries.
\end{restatable}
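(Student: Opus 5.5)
We will prove the corollary by applying \Cref{thm:sqrt-d-block-encoding} to the Hermitian dilation
\begin{equation}
  H \defeq \begin{pmatrix} 0 & U \\ U^\dagger & 0 \end{pmatrix},
\end{equation}
which is a $2N\times 2N$ Hermitian matrix. Since $U$ is unitary, $\norm H = 1$. We treat $H$ as $d$-sparse with $d = N$, since every row of $H$ has at most $N$ nonzero entries.

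\textbf{Sparse oracles for $H$.} The location oracle is trivial and needs no queries: in row $j\le N$ the $\ell$th listed entry is column $N+\ell$, and in row $j>N$ it is column $\ell$. The value oracle for $H$ makes one query to the entry oracle of $U$, or of $U^\dagger$ (one query to $U$ plus complex conjugation), controlled on which block the index lies in. The oracles for $H^\dagger=H$ coincide with these. Hence each sparse-oracle query to $H$ costs $O(1)$ queries to $U$.

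\textbf{Block encoding.} \Cref{thm:sqrt-d-block-encoding} with $d=N$ gives an $(\alpha,\eta)$ block encoding of $H$ with $\alpha=O(1)$, using $O(\sqrt N\log(1/\eta))$ queries. Call it $W$. Then $\norm{\alpha\,\bra{0}W\ket{0} - H}\le \eta$. Restricting to the off-diagonal block, i.e.\ inputs $\ket{1}\ket{\psi}$ (second register half) and projecting the output onto the first half, gives an $(\alpha,\eta)$ block encoding of $U$ itself.

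\textbf{Main obstacle: removing the normalization $\alpha$.} A block encoding with normalization $\alpha=O(1)$ only implements $U/\alpha$, so we must amplify it to an actual implementation of $U$. Because $U/\alpha$ is, up to error $\eta/\alpha$, a scaled isometry with all singular values equal to $1/\alpha$, we will apply oblivious amplitude amplification, or its robust version, or uniform singular value amplification via QSVT. This uses $O(\alpha)=O(1)$ calls to $W$ and $W^\dagger$. If $1/\alpha$ is not of the special form $\sin(\pi/(2(2k+1)))$, we first shrink the block by a known constant factor using an ancilla rotation so that exact amplification applies. Robust oblivious amplitude amplification then yields an operator within $O(\eta)$ of a unitary block encoding of $U$ with normalization $1$. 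Since $U$ is unitary, this means the circuit implements $U$ up to $O(\eta)$ in operator norm, with the ancilla returned near $\ket0$.

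\textbf{Conclusion.} Choosing $\eta=c\,\eps$ for a suitable constant $c$ gives total error at most $\eps$. The query count is $O(1)\cdot O(\sqrt N\log(1/\eps)) = O(\sqrt N\log(1/\eps))$, which proves the corollary.
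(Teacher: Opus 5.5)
Your argument is essentially the paper's: take the Hermitian dilation $H_U$, build an $O(\sqrt N\log(1/\eta))$-query constant-normalization block encoding of it, select the off-diagonal block, and apply robust oblivious amplitude amplification. The only structural difference is where the block encoding comes from. You invoke \Cref{thm:sqrt-d-block-encoding} on $H_U$, which matches how the introduction describes the result. The written proof instead applies \Cref{lem:sparse-inverse-block-encoding} directly with $\kappa=1$, using $H_U^{-1}=H_U$. That avoids the further enlargement to $G_{H_U}$, but both give the same bound.

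One step in your error accounting does not follow as stated. Robust oblivious amplitude amplification controls the encoded block, $\norm{(\bra{0}\otimes\id)W'(\ket{0}\otimes\id)-U}=O(\eta)$. Unitarity of $U$ alone then bounds the amplitude leaking out of the ancilla-$\ket{0}$ subspace only by $O(\sqrt\eta)$, not $O(\eta)$. For example, a block equal to $\sqrt{1-\delta}\,U$ is within $\delta$ of $U$, yet the state leaks amplitude $\sqrt\delta$.

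So your claim that the circuit implements $U$ up to $O(\eta)$ needs one of two fixes. You can add an extra argument: with exactly tuned amplification, the output singular values are $\sin((2k+1)\theta_i)$, and this function is flat at its maximum. The input singular values are within $O(\eta)$ of the tuned value, so the output singular values are $1-O(\eta^2)$ and the leakage really is $O(\eta)$. Alternatively, make the paper's conservative choice $\eta=c\eps^2$. Since $\log(1/\eps^2)=O(\log(1/\eps))$, the $O(\sqrt N\log(1/\eps))$ bound is unaffected either way.
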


We prove \Cref{thm:sqrt-d-block-encoding} and
\Cref{cor:black-box-unitary-implementation} in \Cref{sec:block-encoding-A}.

\subsection{Overview of proof techniques} 

\para{Upper bound.} We first construct an exact block encoding of a
larger matrix $M$ using $O(1)$ queries to the sparse-matrix oracles for
$A$. The inverse $M^{-1}$ contains a scaled copy of $A^{-1}$ and has
norm $O(\kappa\sqrt d)$ (\Cref{thm:sparse-reduction}). We then solve the enlarged linear system using
the block-encoded linear-systems algorithm of Costa et al.~\cite{CAS21}
(\Cref{thm:costa-algorithm})
and recover the original solution with constant probability. This gives
the optimal $O(\kappa\sqrt d\log(1/\eps))$ query complexity.

To construct $M$, we start from the factorization underlying the
standard sparse-matrix block encoding: up to normalization,
$A=F^\dagger T$~\cite{GSLW19}.
Here $T$ depends only on the sparsity structure of $A$, as provided by
the location oracle, whereas $F$ uses both the locations and values of
the nonzero entries of $A$. Implementing $T$ requires only $O(1)$ queries,
but implementing $F$ can require $\Omega(\sqrt d)$
queries. Paying this cost loses the desired improvement in sparsity.

We use a simple linear-algebraic idea: an equation
involving a product of two matrices can be split into two equations by
introducing an intermediate vector $\vec y$:
\begin{equation}
  CD\vec x=\vec b
  \quad\Longleftrightarrow\quad
  \begin{aligned}
    C\vec y&=\vec b,\\
    D\vec x&=\vec y.
  \end{aligned}
  \label{eq:intro-split-product}
\end{equation}
Applying this to $F^\dagger T$
gives a larger linear system that contains the original solution among
its variables. Alternatively, this construction can be viewed as
replacing every original variable by $d$ equally weighted copies, one
for each listed equation in which it may appear.

We then block encode the enlarged system directly. The natural
state-preparation maps now load individual entries of $A$, rather than
entire rows. Normalizing each prepared state separately produces the
rescaled matrix $M$ and its exact block encoding using $O(1)$ queries.

\para{Lower bound.} We first turn a quantum circuit into a linear
system whose solution is a weighted superposition of its intermediate
states (\Cref{lem:circuit-to-matrix}), following the history-state approach to QLSP lower
bounds~\cite{HHL09}. This lets us transfer query lower bounds for
computing the circuit's output to the QLSP.

Two warm-ups explain the sources of hardness. First, embedding an
unstructured search problem gives an $\Omega(\sqrt d)$ lower bound,
even when the condition number and error are constant. Second,
embedding a circuit that computes parity gives an
$\Omega(\kappa\log(1/\eps))$ lower bound, even for constant sparsity.
The condition number controls how quickly the weights in the history
state decay, while the required precision determines how late in the
computation a signal remains detectable. A parity circuit of length
proportional to $\kappa\log(1/\eps)$ still leaves enough output signal
to survive the allowed error, yielding this second lower bound.

To obtain the joint bound, we combine the two constructions: compute
the parity of $R=\Theta(\kappa\log(1/\eps))$ bits, each coming from an
independent search problem of size $\Theta(d)$. The remaining issue is
that reading the answer from the history state may give only a small
advantage over random guessing. The quantum XOR lemma of Lee and
Roland~\cite{LR12} shows that the composed problem remains hard even
with this small advantage, requiring $\Omega(R\sqrt d)$ queries.
Thus the two sources of hardness multiply, giving the desired
$\Omega(\kappa\sqrt d\log(1/\eps))$ lower bound.

\para{Block encoding and unitary implementation.} First suppose that
the input matrix $A$ is well-conditioned. Starting from the block
encoding of $M$ provided by our upper-bound reduction,
we apply standard block-encoding
inversion~\cite{GSLW19} to block encode $M^{-1}$. Since $M^{-1}$
contains a scaled copy of $A^{-1}$, selecting this block gives a
constant-normalization block encoding of $A^{-1}$ using $O(\sqrt d)$
queries at constant error (\Cref{lem:sparse-inverse-block-encoding}).

Now let $H$ be an arbitrary Hermitian $d$-sparse matrix with
$\norm H\leq1$. Unlike $A$, it need not be well-conditioned or even
invertible. We construct a larger sparse matrix $G_H$ that is
well-conditioned and whose inverse contains $H$ as a block. Applying
the preceding inverse-encoding procedure to $G_H$ therefore lets us
block encode $H$ itself.

Finally, for a unitary $U$, its Hermitian dilation $H_U$ is both
well-conditioned and its own inverse. Applying the inverse-encoding
procedure to $H_U$ and selecting the $U$ block gives a block encoding
of $U$. Robust oblivious amplitude amplification~\cite{BCK15} then
implements $U$ with bounded error using the optimal $O(\sqrt N)$
entry queries.

\subsection*{Concurrent Work}
In concurrent work, Patel~\cite{Pat26} also obtains the lower bound
$\Omega(\kappa\sqrt d\log(1/\eps))$ for the quantum linear systems
problem in the sparse-matrix access model.

\subsection*{Acknowledgements}
C.B.-P. thanks Antonio Anna Mele for insightful discussions on open problems in quantum linear systems.
R.K. thanks Suhail Sherif and Praneeth Netrapalli for many interesting discussions about variants and special cases of the quantum linear systems problem.
A.W.H. was supported by the U.S. Department of Energy, Office of Science, National Quantum Information Science Research Centers, Quantum Systems Accelerator (Award No. DE-SCL0000121).
C.B.-P. acknowledges the Munich Quantum Valley for financial support. ChatGPT Pro 5.6 contributed to the proof development of the upper bound in \Cref{thm:upper-bound} and various versions of ChatGPT, Claude, and Gemini were used to simplify the proof and prepare this manuscript.
The remaining proofs were developed independently by the authors. The authors take full responsibility for the correctness of the results.

\section{Quantum algorithm for the QLSP}
\label{sec:upper-bound}

We now prove the upper bound in \Cref{thm:main}.

\begin{theorem}[Quantum algorithm]
\label{thm:upper-bound}
There is a quantum algorithm that solves the QLSP using
\begin{equation}
  O(\kappa\sqrt d\log(1/\eps)) \ \text{queries.}
  \label{eq:upper-bound-complexity}
\end{equation}
\end{theorem}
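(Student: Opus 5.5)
The plan follows the overview: build an enlarged linear system whose matrix $M$ has an exact block encoding using $O(1)$ queries, and show $\norm{M^{-1}} = O(\kappa\sqrt d)$. Then run the block-encoded QLSP algorithm of Costa et al.\ \cite{CAS21}, which uses $O(\kappa'\log(1/\eps))$ calls to a block encoding with condition number $\kappa'$. Finally, postselect the register holding the original unknowns.

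\textbf{Step 1 (factorization and splitting).} Write $A = F^\dagger T$ up to normalization, with both factors acting from $\mathbb{C}^N$ into $\mathbb{C}^N\otimes\mathbb{C}^d$. The factor $T\ket{k} = d^{-1/2}\sum_\ell \ket{\nu(k,\ell)}\ket{\ell'}$ is determined by the sparsity pattern (using the $A^\dagger$ location oracle). The factor $F$ satisfies $F^\dagger \ket{j,\ell} \propto A_{j\nu(j,\ell)}$ placed in the appropriate position. Concretely, I would introduce edge variables $y_{jk}$ for each listed pair $(j,k)$, with constraints $y_{jk} = \sqrt d\,A_{jk}x_k$ (a split copy of $x_k$) and $\sum_k y_{jk} \cdot(\text{const}) = b_j$. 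Assembling these into a block matrix
\[
M = \begin{pmatrix} \cdot & C \\ D & -\id\end{pmatrix},
\]
or a Hermitian variant, gives an enlarged system whose solution contains $\vec x$ in one block. Each row of $M$ now involves a single matrix entry $A_{jk}$ together with a uniform superposition over $d$ listed positions. So the state-preparation unitaries for the rows and columns need only $O(1)$ queries: one query to $\nu$ and one to the value oracle, followed by a controlled rotation loading $A_{jk}$ into an amplitude with its complement in an ancilla. Normalizing each prepared state separately yields an exact $(O(1),0)$ block encoding of a rescaled $M$.

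\textbf{Step 2 (conditioning; main obstacle).} I must show $\norm{M}=O(1)$ and $\norm{M^{-1}} = O(\kappa\sqrt d)$, and that the $\vec x$ block carries a constant fraction of the solution norm. The norm bound $\norm{M}=O(1)$ follows from the block encoding itself. For the inverse, I would eliminate the intermediate variables by a Schur complement: solving $M(\vec x,\vec y)=(\vec b,0)$ and in general $M(\vec u,\vec v) = (\vec p,\vec q)$ reduces to $A\vec x = \vec p'$, where $\vec p' = \vec p + (\text{bounded map})\vec q$ and $\vec y$ is a bounded linear function of $\vec x$ and $\vec q$. The factor $\sqrt d$ enters because of how the uniform weights $d^{-1/2}$ combine with the per-entry loading: the effective Schur complement is $A/\sqrt d$ up to $O(1)$ corrections, or the eliminated variables are amplified by $\sqrt d$. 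This gives $\norm{M^{-1}}\le O(\sqrt d)\,\norm{A^{-1}} + O(\sqrt d)$. Getting the rescaling exactly right is where I expect the difficulty. The row normalizations change as each state is normalized separately, and I need these row and column scalings to be bounded diagonal matrices with entries between $\Omega(1/\sqrt d)$ and $O(1)$, absorbed so that only one factor of $\sqrt d$ appears rather than $d$. I would first check the simplest choice, with the copies $y_{jk}=x_k$ equally weighted, and compute the Schur complement explicitly.

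\textbf{Step 3 (recovery).} I need to show that $\norm{\vec x}$ is a constant fraction of $\norm{(\vec x,\vec y)}$, after rescaling the blocks by a constant if needed. Since $\vec y$ consists of $d$ weighted copies of $\vec x$, the total weight of $\vec y$ is comparable to $\norm{\vec x}$, so measuring the block flag succeeds with constant probability. Running Costa et al.\ with $\kappa' = O(\kappa\sqrt d)$ and error $\eps/2$, and state preparation of $\ket{b}$ padded with zeros, then gives $O(\kappa\sqrt d\log(1/\eps))$ queries. Conditioning on the flag inflates the error by only a constant factor, which is handled by choosing the internal precision as a constant multiple of $\eps$. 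Amplifying to success probability $1/2$ costs $O(1)$ repetitions.
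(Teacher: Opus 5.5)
Your outline follows the paper's strategy: split $A=\sqrt d\,F^\dagger T$ through an intermediate vector, load one entry $A_{ij}$ per prepared state, normalize each state separately, and call \cite{CAS21} with condition number $O(\kappa\sqrt d)$. But you leave Step~2 open, and that is where the one non-obvious idea lies; without it the argument gives nothing new.

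Per-entry normalization introduces a diagonal rescaling whose individual entries can be as large as $\sqrt{d+1}$. In the paper, the state for $z_{ij}$ must carry amplitude $\sqrt d\,A_{ij}$ to compensate for the uniform $1/\sqrt d$ in $\ket{\rho_i}$. It is therefore divided by $q_{ij}=\sqrt{1+d\abs{A_{ij}}^2}$, and $M=L\operatorname{diag}(Q^{-1},\id)$. Solving $M\binom{\vec w}{\vec v}=\binom{\vec p}{\vec q}$ gives $\vec v=\sqrt d\,A^{-1}(\vec q-\sqrt2F^\dagger\vec p)$ and $\vec w=\sqrt2\,Q\vec p+QT\vec v$. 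So your \emph{bounded linear function} for the eliminated block is $QT$ applied to a vector of size $\kappa\sqrt d$. Using only $\norm Q\le\sqrt{2d}$ yields $\norm{M^{-1}}=O(\kappa d)$, no better than the standard block encoding of $A/d$.

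The missing fact is \Cref{lem:Q-bounds}. The columns of $QT$ are orthogonal, with $\norm{QT\ket j}^2=\frac1d\sum_{i\in\mathcal N(j)}q_{ij}^2=1+\norm{A\ket j}^2\le2$ because $\norm A\le1$. So although a single $q_{ij}$ can be of order $\sqrt d$, the mean square over a column's $d$ listed slots is at most $2$. This is what gives $\norm{M^{-1}}\le3\kappa\sqrt d+\sqrt2\norm Q\le5\kappa\sqrt d$.

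Step~3 is likewise only asserted, and it fails for the system you actually wrote down. With $y_{jk}=\sqrt d\,A_{jk}x_k$, the solution has $\norm{\vec y}^2=d\sum_k\norm{A\ket k}^2\abs{x_k}^2$. This equals $d\norm{\vec x}^2$ whenever $A$ is unitary; already for $A=\id$ you get $y_{kk}=\sqrt d\,x_k$. Row normalization does not change the solution, and a constant rescaling of blocks cannot remove a factor of $\sqrt d$. So the flag succeeds with probability about $1/(d+1)$, and amplifying it costs another factor $\sqrt d$, putting you back at $O(\kappa d\log(1/\eps))$. Your fallback of unweighted copies $y_{jk}=x_k$ has the same defect.

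The paper instead weights the copies by $1/\sqrt d$, via $\vec z=T\vec u$ with $T$ an isometry. After column normalization, the intermediate block of $M^{-1}\binom{0}{\vec b}$ is $QT\vec u$, so the same bound $\norm{QT}\le\sqrt2$ gives readout probability at least $1/3$.

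Your orientation can be repaired by dropping the $\sqrt d$. Take $y_{jk}=A_{jk}x_k$ with bottom equations $\frac1{\sqrt d}\sum_{k\in\mathcal N(j)}y_{jk}=b_j$. Use uniform states for the $x_k$ columns and put the amplitude $\sqrt d\,A_{jk}$ into the row states. You then normalize rows by $q_{jk}$, the solution is untouched, and $\norm{\vec y}=O(\norm{\vec x})$. However, the inverse bound then needs $\norm{Q\,\mathrm{SWAP}\,T}\le\sqrt2$, which is the same averaging lemma applied to rows.

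Finally, index the copies by pairs $(j,k)\in[N]^2$, as you eventually do, not by $(j,\ell)$. The $\ket{\ell'}$ in your $T$ would require the position of $k$ in the list of row $\nu(k,\ell)$, which the location oracle does not supply.
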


To explain the proof structure, we need to recall the concept of a block encoding.
A unitary $U_K$ is an (exact) block encoding of $K$ if, for all
computational-basis states $\ket i,\ket j$ of the system register,
\begin{equation}
  (\bra{0^a}\otimes\bra i)U_K
  (\ket{0^a}\otimes\ket j)
  =K_{ij}.
  \label{eq:block-encoding-definition}
\end{equation}

By the standard Hermitian embedding~\cite{HHL09}, we may assume without loss of generality that $A=A^\dagger$.
To prove \Cref{thm:upper-bound}, we show how to use sparse-matrix oracle access to $A$ to get an exact block encoding of a larger matrix $M$ using $O(1)$ queries. We then solve the larger linear system using $M$ and show that the solution to this allows us to recover the solution to our original linear system. The following theorem summarizes this reduction.

\begin{theorem}[Sparse-matrix oracle QLSP reduction]
\label{thm:sparse-reduction}
Let $A$ be an invertible Hermitian $d$-sparse matrix
with $\norm A\leq1$ and $\norm{A^{-1}}\leq\kappa$.  Given standard
sparse-matrix oracle access to $A$, one can implement an exact block encoding of an invertible matrix $M$ using $O(1)$ queries. $M$ has the
following properties:
\begin{enumerate}
  \item $M$ is a contraction (i.e., $\norm{M} \leq 1$) and $\norm{M^{-1}} \leq 5\kappa\sqrt d$.
  \item The bottom-right block of $M^{-1}$ is $\sqrt d\,A^{-1}$.
  \item For every nonzero $\vec b$, measuring the bottom component of the
  normalized state proportional to $M^{-1}\binom{0}{\vec b}$ yields
  $A^{-1}\vec b/\norm{A^{-1}\vec b}$ with constant probability.
\end{enumerate}
\end{theorem}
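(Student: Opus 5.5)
The plan is to implement the splitting idea of \cref{eq:intro-split-product} concretely, with one intermediate variable per listed entry. Index an auxiliary space $\mathbb C^{Nd}$ by pairs $(j,\ell)$ with $j\in[N]$ and $\ell\in[d]$. Define the $Nd\times N$ "copy" matrix $P$ by $P_{(j,\ell),k}=\delta_{k,\nu(j,\ell)}$, and the $N\times Nd$ "weight" matrix $W$ by $W_{j',(j,\ell)}=\delta_{j'j}A_{j,\nu(j,\ell)}$. Then $WP=A$.

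Introduce $\vec y=\tilde P\vec x$ with $\tilde P\defeq P/\sqrt d$. The system $A\vec x=\vec b$ becomes $\vec y-\tilde P\vec x=0$ together with $W\vec y=\vec b/\sqrt d$. This suggests taking
\begin{equation*}
  M=\frac1c\begin{pmatrix} \id & -\tilde P\\ W & 0\end{pmatrix}
\end{equation*}
for a suitable constant $c$, possibly with different constant weights on the blocks.

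\textbf{Inverse.} Solve $M(\vec y,\vec x)=(\vec u,\vec v)$ by hand. Up to the factor $c$, this gives
\begin{equation*}
  \vec x=\sqrt d\,A^{-1}(\vec v-W\vec u),\qquad
  \vec y=\vec u+\tilde P\vec x .
\end{equation*}
So $M$ is invertible and its bottom-right block is $\sqrt d\,A^{-1}$; I will choose the constants in $M$ so that this holds exactly, which gives property 2.

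\textbf{Norms.} Next I bound $\norm{\tilde P}$ and $\norm W$.
\begin{itemize}
  \item $P^\dagger P$ is diagonal, and its entries count how often column $k$ is listed. Assuming, as is standard, that the location oracle for a Hermitian $A$ lists a symmetric pattern, each count is at most $d$, so $\norm{\tilde P}\le1$.
  \item $WW^\dagger$ is diagonal with entries $\sum_\ell|A_{j\nu(j,\ell)}|^2\le\norm A^2\le1$, so $\norm W\le1$.
\end{itemize}
Hence a constant $c$ makes $\norm M\le1$. Each block of $M^{-1}$ is then bounded by a product of these quantities with $\sqrt d\,\kappa$, for example $\norm{\id-\tilde P\sqrt d A^{-1}W}\le1+\sqrt d\kappa$. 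Summing the blocks gives $\norm{M^{-1}}\le 5\kappa\sqrt d$ after tracking constants. This is property 1.

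\textbf{Recovery.} For property 3, apply $M^{-1}$ to $(0,\vec b)$. This gives $\vec x=\sqrt d A^{-1}\vec b$ and $\vec y=\tilde P\vec x$, with $\norm{\vec y}\le\norm{\vec x}$. So the bottom component carries probability at least $1/2$. Conditioned on measuring it, the state is exactly $A^{-1}\vec b/\norm{A^{-1}\vec b}$.

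\textbf{Block encoding.} The main obstacle is building an exact block encoding of $M$ with $O(1)$ queries, in particular of $W$. Its row $j$ has up to $d$ entries, and loading a whole row coherently is exactly what costs $\sqrt d$. I would write each block as a product (left state-preparation)$^\dagger\cdot$(right state-preparation), in the style of $A=F^\dagger T$.
\begin{itemize}
  \item For $\tilde P$, use the map $\ket k\mapsto\frac1{\sqrt d}\sum_\ell\ket{\nu(k,\ell),\ell'}$. This needs only location queries, via the inverse location oracle for the symmetric pattern.
  \item For $W$, load individual entries: send $\ket{j,\ell}$ to a state that stores $\sqrt{A_{j\nu(j,\ell)}}$, or its magnitude and phase, in an ancilla rotation. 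This takes one value query plus uncomputation, and pairs it with the uniform superposition over $\ell$ on the other side.
\end{itemize}
Normalizing each such prepared state separately introduces fixed constant rescalings of the blocks. I will absorb these into the definition of $M$ and recheck that the inverse computation and norm bounds above are unchanged up to constants. Finally, I combine the blocks into $M$ by a standard linear combination of block encodings with a one-qubit selector. All of this uses $O(1)$ oracle calls.
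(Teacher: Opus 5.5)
Your linear algebra is sound. It is exactly the paper's intermediate matrix $L$: your $\tilde P$ and $W$ are $T$ and $F^\dagger$ up to indexing, and your inverse formula and norm bounds are correct for that matrix.

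The gap is the block-encoding step, which is where the real work lies. The matrix $M\propto\begin{pmatrix}\id&-\tilde P\\W&0\end{pmatrix}$ with constant weights cannot be exactly block encoded with $O(1)$ queries at all. Selecting its bottom-left block gives a constant-normalization block encoding of $W$. Applying its adjoint to $\ket{j}$ then prepares $\sum_\ell\overline{A_{j\nu(j,\ell)}}\ket{j,\ell}$ with constant probability. For $A=\begin{pmatrix}0&P_\sigma\\P_\sigma^\dagger&0\end{pmatrix}$ with a hidden cyclic shift, this finds the unique nonzero entry of a row, i.e.\ it solves unstructured search, which needs $\Omega(\sqrt d)$ queries.

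Concretely, your claim that normalizing each prepared state separately yields only ``fixed constant rescalings'' is false. With a uniform superposition over $\ell$ on the left (amplitude $1/\sqrt d$), the right-hand state for $(j,\ell)$ must carry amplitude $\sqrt d\,A_{j\nu(j,\ell)}$, which can be as large as $\sqrt d$. The other routes you mention fail too:
\begin{itemize}
\item Splitting $A$ into square roots does not help, because the left side is still uniform.
\item Scaling uniformly by $\sqrt d$ gives $W/\sqrt d$ and returns you to the $\kappa d$ bound.
\item An LCU of separate block encodings of $\id$, $\tilde P$, $W$ hits the same obstruction for $W$.
\end{itemize}

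The missing idea is to let the identity block and the $W$ block share one right-hand state, $(\ket{\mathsf z}+\sqrt d\,A_{ij}\ket{\mathsf r})\ket{i,j}$, and normalize it by its own norm $q_{ij}=\sqrt{1+d|A_{ij}|^2}\in[1,\sqrt{d+1}]$. This rescaling is entry-dependent, so what you actually block encode is $M=L\,\mathrm{diag}(Q^{-1},\id)$, not $L$ up to constants. Its inverse is $\mathrm{diag}(Q,\id)L^{-1}$. The bottom-right block $\sqrt d\,A^{-1}$ survives, but two things change:
\begin{itemize}
\item The top component of $M^{-1}\binom{0}{\vec b}$ becomes $QT\vec u$, so your bound $\|\vec y\|\le\|\vec x\|$ no longer applies.
\item $M^{-1}$ acquires an extra term $\sqrt2\,\mathrm{diag}(Q,0)$.
\end{itemize}
Properties 1 and 3 therefore need the nontrivial estimate $\|QT\|\le\sqrt2$. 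It holds because $\frac1d\sum_{i\in\mathcal N(j)}q_{ij}^2=1+\sum_i|A_{ij}|^2\le2$, using that each column of $A$ has norm at most $1$. You also need $\|Q\|\le\sqrt{2d}$ to control the additive term.

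A smaller point: your $\tilde P$, indexed by $(j,\ell)$, is a contraction only if you assume a symmetric listing pattern. The paper indexes the intermediate variables by $(i,j)\in[N]^2$ instead, which makes $T$ an isometry with no such assumption.
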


To solve the enlarged system, we use the QLSP algorithm of Costa et al.~\cite{CAS21}, which assumes block-encoding access to the input matrix.

\begin{theorem}[QLSP from block encoding~{\cite[Theorem~11]{CAS21}}] \label{thm:costa-algorithm}
Let $A$ be an invertible matrix with $\norm A\leq1$ and $\norm{A^{-1}}\leq\kappa$. Given an oracle block encoding the matrix $A$ and an oracle preparing $\ket{b}$, there exists a quantum algorithm that produces the normalized state $\ket{x}$ to within error $\eps$ with success probability at least $1/2$, using
\begin{equation}
  O(\kappa\log(1/\eps)) \ \text{queries.}\footnote{There is no dependence on $d$ in this algorithm because it assumes access to a block encoding of $A$, not to a sparse-matrix oracle for $A$.}
\end{equation}
\end{theorem}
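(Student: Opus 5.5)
This result is imported from Costa et al.~\cite[Theorem~11]{CAS21}, so in the paper it suffices to check that their hypotheses match ours (a block encoding of $A$ with $\norm A\leq1$ and $\norm{A^{-1}}\leq\kappa$, plus a preparation oracle for $\ket b$). If I had to reconstruct the argument, I would follow their two-stage structure: first a \emph{discrete adiabatic} stage that prepares a constant-fidelity approximation to $\ket x$ using $O(\kappa)$ queries, then an \emph{eigenstate filtering} stage that boosts the fidelity to $1-\eps$ using $O(\kappa\log(1/\eps))$ queries.

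\textbf{Setup.} I would first reduce to a Hermitian problem by the standard dilation. Then I would introduce the Subaşı--Somma--Orsucci style Hamiltonians $H_0$ and $H_1$, built from $A$ and the projector $Q_b=\id-\ket b\!\bra b$, so that:
\begin{itemize}
\item $H_0$ has $\ket{0,b}$ (tensored with an ancilla) in its null space;
\item $H_1$ has $\ket{0,x}$ in its null space;
\item the interpolation $H(s)=(1-f(s))H_0+f(s)H_1$ keeps a spectral gap of order $\Delta(s)\gtrsim 1-f(s)+f(s)/\kappa$ around the relevant zero eigenvector.
\end{itemize}
Each $H(s)$ has a block encoding built from $O(1)$ uses of the $A$ oracle and of the preparation oracle for $\ket b$, by linear combination of block encodings. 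From it I would form a qubitized walk operator $W(s)$. Its eigenphases encode the eigenvalues of $H(s)$, and it preserves the gap up to constants.

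\textbf{Adiabatic stage.} I would choose the schedule $f$ so that $f'(s)\propto \Delta(f(s))^{p}$, the local-adiabatic choice with $1<p<2$. Applying the walk operators $W(j/T)$ sequentially for $T$ steps, I would invoke a discrete adiabatic theorem. That theorem bounds the deviation from the instantaneous eigenpath by a quantity of order $\frac1T\bigl(\frac{\norm{\dot H}}{\Delta^2}\big|_{\text{ends}}+\int \frac{\norm{\dot H}^2}{\Delta^3}+\frac{\norm{\ddot H}}{\Delta^2}\bigr)$, plus discrete correction terms. With the above schedule this bound evaluates to $O(\kappa/T)$, so $T=O(\kappa)$ steps suffice to reach constant error.

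\textbf{Filtering stage.} Starting from a state with constant overlap on the null space of $H_1$, I would apply, via QSP/QSVT, the minimax Chebyshev-based filter polynomial $R_\ell(x)$. This polynomial equals $1$ at $0$ and is at most $e^{-\Omega(\ell/\kappa)}$ in magnitude on $[1/\kappa,1]$. Degree $\ell=O(\kappa\log(1/\eps))$ suffices, and a single flagged measurement then succeeds with constant probability, which gives success probability at least $1/2$. The total query count is $O(\kappa)+O(\kappa\log(1/\eps))$.

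\textbf{Main obstacle.} I expect the hard step to be the discrete adiabatic theorem with only linear dependence on $\kappa$. The continuous-time bounds give $\kappa\log\kappa$ or worse, so one must carefully control the summation-by-parts terms for products of walk operators with the $p$-dependent schedule. If that bound were unavailable, I would fall back on Zeno/randomization methods, which lose $\log\kappa$ factors.
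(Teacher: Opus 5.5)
Your proposal is correct and matches the paper. The paper gives no proof of its own: it imports this result as Theorem~11 of Costa et al.~\cite{CAS21}, so only the hypotheses need to match, and your sketch (a discrete adiabatic stage over qubitized walk operators with schedule $f'\propto\Delta^p$ costing $O(\kappa)$ queries, followed by eigenstate filtering of degree $O(\kappa\log(1/\eps))$) is an accurate summary of the cited argument.
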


Combining these two theorems completes the proof of \Cref{thm:upper-bound}. The rest of this section is devoted to proving \Cref{thm:sparse-reduction}. The reduction in \Cref{thm:sparse-reduction} also yields an efficient block encoding of $A$ itself, as shown in \Cref{thm:sqrt-d-block-encoding}.

\subsection{The standard sparse-matrix oracle to block-encoding reduction}
\label{sec:old-overlap}

Fix $n\geq1$, set $N=2^n$, and write $[N]=\{1,\ldots,N\}$. For each $i\in[N]$, the location
oracle lists $d$ distinct positions
\begin{equation}
  \mathcal N(i)
  =\{\nu(i,1),\ldots,\nu(i,d)\}
  \label{eq:neighbor-lists}
\end{equation}
that contain the support of row $i$, where some of the listed positions may have value
zero.  The oracle implements the reversible map
$\ket{i,\ell}\mapsto\ket{i,\nu(i,\ell)}$.  Since $A$ is Hermitian,
$\mathcal N(i)$ also contains the support of column $i$.  A value oracle
returns $A_{ij}$.

We first review the sparse-matrix oracle block encoding of Gily\'en et
al.~\cite{GSLW19} in a form suited to our construction.  
The construction uses two maps:
\begin{equation}
  T\ket j
  \defeq
  \Bigl(\frac1{\sqrt d}\sum_{i\in\mathcal N(j)}\ket i\Bigr)\otimes\ket j,
  \qquad
  F\ket i
  \defeq
  \ket i\otimes
  \Bigl(\sum_{j\in\mathcal N(i)}\overline{A_{ij}}\ket j\Bigr).
  \label{eq:T-F-definition}
\end{equation}
The map $T$ retains its input in the second register and distributes it
uniformly among its $d$ listed equation slots, which include every
equation in which the variable occurs. The map $F$ retains its input in the first
register and loads the conjugated row of $A$ into its listed slots.
For both maps, distinct basis inputs produce orthogonal outputs, and
\begin{equation}
  T^\dagger T=\id,
  \qquad
  F^\dagger F
  =
  \sum_{i=1}^N
  \left(\sum_{j=1}^N\abs{A_{ij}}^2\right)
  \ket i\!\bra i
  \ \preceq \ \norm{A} \sum_{i=1}^N \ket i\!\bra i
  \ \preceq\ \id.
  \label{eq:T-F-norms}
\end{equation}
Hence $T$ is an isometry and $F$ is a contraction.  Equivalently, $F$
can be completed to an isometry by adding a flag register.

Their overlap has the desired form:
\begin{equation}
  \bra iF^\dagger T\ket j
  =\braket{Fi|Tj}
  =\frac{A_{ij}}{\sqrt d},
  \qquad\text{and hence}\qquad
  F^\dagger T=\frac A{\sqrt d}.
  \label{eq:dream-overlap}
\end{equation}
Here Hermiticity ensures that, if $A_{ij}\neq0$, then
$i\in\mathcal N(j)$ and $j\in\mathcal N(i)$.  Listed zero entries do not
contribute to the overlap.

To obtain the usual block encoding, use $T$ as one of the two state preparation isometries and an isometric completion of $F/\sqrt{d}$ as the other.  The former requires
only a uniform superposition over the listed positions.  For the
latter, a location query followed by a value query and a controlled
rotation implements
\begin{equation}
  \ket i
  \longmapsto
  \ket i \otimes \frac1{\sqrt d} 
  \sum_{j\in\mathcal N(i)}
  \ket j
  \left(
    \overline{A_{ij}}\ket0
    +\sqrt{1-\abs{A_{ij}}^2}\ket1
  \right).
  \label{eq:old-row-preparation}
\end{equation}
Its flag-$0$ component is exactly $F\ket i/\sqrt d$.  The
overlap of these two isometries is therefore
\begin{equation}
  \left(\frac F{\sqrt d}\right)^\dagger T
  =\frac Ad
  \label{eq:old-overlap}
\end{equation}
which gives an exact block encoding of $A/d$.

The inverse norm seen by the block-encoded linear-systems algorithm is
\begin{equation}
  \norm{(A/d)^{-1}}
  =d\norm{A^{-1}}
  \leq\kappa d,
  \label{eq:old-effective-condition}
\end{equation}
so the standard construction uses
$O(\kappa d\log(1/\eps))$ sparse-matrix oracle queries.  By contrast, an
$O(1)$-query block encoding of $A/\sqrt d$ would have inverse norm at
most $\kappa\sqrt d$ and would exactly provide the desired improvement.

The two factors $1/\sqrt d$ in \cref{eq:old-overlap} have different
origins.  The factor in $T$ is the unavoidable cost of distributing one
amplitude over $d$ orthogonal slots.  The other arises from preparing
the entire row $F\ket i$ with amplitude scale $1/\sqrt d$.  Preparing
$F\ket i$ itself would make its overlap with $T$ equal to
$A/\sqrt d$, as required.

There is no linear-algebraic obstruction: $F$ is a contraction.  The
obstruction is algorithmic. Suppose one entry in row $i$ has magnitude
nearly one at an unknown position $j^*$, while all other entries are
arbitrarily small. Measuring a state close to $F\ket i$ would then reveal
$j^*$, hence solving an instance of unstructured search, which requires
$\Omega(\sqrt d)$ queries~\cite{BBBV97}.  Paying this cost would
restore the total complexity to $O(\kappa d\log(1/\eps))$. Therefore, although $F$ is the desired map, we avoid implementing it directly.

\subsection{A larger linear system}
\label{sec:augmented-system}

We now show how to use $F^\dagger T=A/\sqrt d$ without implementing $F$.
The difficulty is that one state would have to contain an entire row of $A$.
Instead, assign a separate variable to each pair $(i,j)$, representing variable $j$ in equation $i$, and let the linear-system solver perform the sum over $j$.

The argument in this section is purely ordinary linear algebra, and none of its vectors is
assumed normalized.  Let $\vec b$ be the right-hand side and let
$\vec x=A^{-1}\vec b$ be the solution of $A\vec x=\vec b$.  Since
$F^\dagger T=A/\sqrt d$, define
\begin{equation}
  \vec u
  \defeq
  \sqrt d\,\vec x
  =\sqrt d\,A^{-1}\vec b.
  \label{eq:u-and-x}
\end{equation}
Thus $\vec u$ has the same direction as the desired
solution $\vec x$, and it is the unique solution of
\begin{equation}
  F^\dagger T\vec u=\vec b.
  \label{eq:product-system}
\end{equation}
Introduce the intermediate vector
\begin{equation}
  \vec z=T\vec u,
  \qquad
  F^\dagger\vec z=\vec b.
  \label{eq:intermediate-equations}
\end{equation}
In coordinates, these equations read
\begin{equation}
  z_{ij}
  =
  \begin{cases}
    u_j/\sqrt d,&i\in\mathcal N(j),\\
    0,&i\notin\mathcal N(j),
  \end{cases}
  \qquad
  \sum_{j=1}^N A_{ij}z_{ij}=b_i.
  \label{eq:intermediate-coordinates}
\end{equation}
Thus one variable $u_j$ has been replaced by $d$ listed copies, with all other coordinates set to zero.  The first equations impose
$\vec z=T\vec u$, and the second impose $F^\dagger\vec z=\vec b$.
Together they give $A\vec u/\sqrt d=\vec b$.  This is the standard
auxiliary-variable trick for replacing a product by two equations.

We can combine the two equations in \cref{eq:intermediate-equations} into a single linear system:
\begin{equation}
  \underbrace{
  \begin{pmatrix}
    \id/\sqrt2&-T/\sqrt2\\
    F^\dagger&0
  \end{pmatrix}}_{=\,L}
  \begin{pmatrix}
    \vec z\\
    \vec u
  \end{pmatrix}
  =
  \begin{pmatrix}
    0\\
    \vec b
  \end{pmatrix}.
  \label{eq:M-system}
\end{equation}
The factor $1/\sqrt2$ has no algebraic significance.  It normalizes
the two coefficients in each equation $\vec z=T\vec u$, which is
convenient for the subsequent block encoding. Note that the number of variables has increased from $N$ to $N^2+N$.  This
would generally be expensive for a classical linear-system algorithm,
but a quantum register holding $N^2+N\leq2N^2$ basis states requires only
$2n+1$ qubits.

Now let us upper bound the condition number of $L$. 

\begin{proposition}
The matrix $L$ is invertible and satisfies $\norm{L}\norm{L^{-1}} = O(\kappa \sqrt{d})$.
\end{proposition}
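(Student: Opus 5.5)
We bound $\norm{L}$ and $\norm{L^{-1}}$ separately. Along the way we obtain an explicit formula for $L^{-1}$ via block elimination (a Schur complement), which also proves invertibility.

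\textbf{Upper bound on $\norm L$.} Write $L$ as a sum of three block matrices: $\id/\sqrt2$ in the top-left, $-T/\sqrt2$ in the top-right, and $F^\dagger$ in the bottom-left. By \cref{eq:T-F-norms}, $T$ is an isometry and $F$ is a contraction, so each piece has norm at most $1$. The triangle inequality then gives $\norm L\le 1/\sqrt2+1/\sqrt2+1=O(1)$. A sharper constant is not needed.

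\textbf{Explicit inverse.} To solve $L\binom{\vec z}{\vec u}=\binom{\vec p}{\vec q}$, use the first block row to get $\vec z=T\vec u+\sqrt2\,\vec p$. Substituting into the second row gives $F^\dagger T\vec u+\sqrt2 F^\dagger\vec p=\vec q$. By \cref{eq:dream-overlap}, $F^\dagger T=A/\sqrt d$, which is invertible, so
\begin{equation}
  \vec u=\sqrt d\,A^{-1}\bigl(\vec q-\sqrt2\,F^\dagger\vec p\bigr),\qquad
  \vec z=\sqrt2\,\vec p+T\vec u .
\end{equation}
Every right-hand side therefore has exactly one solution, so $L$ is invertible. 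In block form,
\begin{equation}
  L^{-1}=\begin{pmatrix}\sqrt2\,(\id-\sqrt d\,TA^{-1}F^\dagger) & \sqrt d\,TA^{-1}\\ -\sqrt{2d}\,A^{-1}F^\dagger & \sqrt d\,A^{-1}\end{pmatrix}.
\end{equation}
Its bottom-right block is already $\sqrt d\,A^{-1}$, which matches item~2 of \Cref{thm:sparse-reduction}.

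\textbf{Upper bound on $\norm{L^{-1}}$.} Bound each block using $\norm T=1$, $\norm F\le1$ and $\norm{A^{-1}}\le\kappa$:
\begin{itemize}
  \item top-left: at most $\sqrt2(1+\sqrt d\,\kappa)$;
  \item top-right: at most $\sqrt d\,\kappa$;
  \item bottom-left: at most $\sqrt{2d}\,\kappa$;
  \item bottom-right: at most $\sqrt d\,\kappa$.
\end{itemize}
The norm of a $2\times2$ block operator is at most the sum of its block norms, so $\norm{L^{-1}}=O(\kappa\sqrt d)$, using $\kappa\ge1$ (which follows from $\norm A\le1$). Combined with $\norm L=O(1)$, this gives $\norm L\norm{L^{-1}}=O(\kappa\sqrt d)$.

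The only real step is the block elimination, and it works because the Schur complement collapses to $F^\dagger T=A/\sqrt d$. This identity in turn relies on Hermiticity, so that the listed neighborhoods are symmetric. Tracking constants carefully, for example to reach the stated $5\kappa\sqrt d$ after rescaling $L$ into a contraction $M$, is routine bookkeeping and is left for \Cref{thm:sparse-reduction}.
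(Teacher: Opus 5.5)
Your proposal is correct and follows the paper's proof essentially step for step. The paper bounds $\norm{L}=O(1)$ because its blocks are contractions, writes down the same explicit block formula for $L^{-1}$, and bounds each block by $O(\kappa\sqrt d)$. The only difference is that you derive the inverse by block elimination via $F^\dagger T=A/\sqrt d$ (which gives existence and uniqueness directly), whereas the paper simply verifies the formula as a left inverse of the square matrix $L$.
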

\begin{proof}
Since the 4 blocks in $L$ are all contractions, $\norm{L}=O(1)$.
Using $F^\dagger T=A/\sqrt d$, we have
\begin{equation}
  \underbrace{
  \begin{pmatrix}
    \sqrt2\bigl(\id-\sqrt d\,TA^{-1}F^\dagger\bigr)
      &\sqrt d\,TA^{-1}\\
    -\sqrt{2d}\,A^{-1}F^\dagger
      &\sqrt d\,A^{-1}
  \end{pmatrix}}_{=\,L^{-1}}
  \underbrace{
  \begin{pmatrix}
    \id/\sqrt2&-T/\sqrt2\\
    F^\dagger&0
  \end{pmatrix}}_{=\,L}
  =
  \begin{pmatrix}
    \id&0\\
    0&\id
  \end{pmatrix}.
  \label{eq:L-back-substitution}
\end{equation}
Since  $\norm T=1$, $\norm{F^\dagger}\leq1$, and
$\norm{A^{-1}}\leq\kappa$, each block of $L^{-1}$
has norm $O(\kappa\sqrt d)$, and hence
$\norm{L^{-1}}=O(\kappa\sqrt d)$.
\end{proof}

Thus $L$ already has the desired condition-number
scale.  A block encoding of $L/\sqrt2$ using $O(1)$ queries would complete the construction. However, $L$ contains $F^\dagger$, whose direct
implementation is precisely the obstruction identified in \Cref{sec:old-overlap}.  It remains to modify $L$ so that it
can be block encoded without spoiling its inverse bound or its
solution.

\subsection{Block encoding the larger linear system}
\label{sec:normalization}

As in \Cref{sec:old-overlap}, we seek a block encoding by writing a suitably normalized version of $L$ as $V_L^\dagger V_R$, where $V_L$ and $V_R$ are contractions, completed to isometries when needed. The normalized maps will have the form
\begin{equation}
  \begin{aligned}
    V_L&=
    \begin{pmatrix}
      \ket{\ell_{11}}&\ket{\ell_{12}}&\cdots&\ket{\ell_{NN}}
      &\ket{\rho_1}&\cdots&\ket{\rho_N}
    \end{pmatrix},\\
    V_R&=
    \begin{pmatrix}
      \ket{\psi_{11}}&\ket{\psi_{12}}&\cdots&\ket{\psi_{NN}}
      &\ket{\tau_1}&\cdots&\ket{\tau_N}
    \end{pmatrix}.
  \end{aligned}
  \label{eq:VL-VR-columns}
\end{equation}
We first choose vectors whose inner products reproduce $L$. The $\psi$ states will initially be unnormalized, so we denote them by $\ket{\widetilde\psi_{ij}}$.

We start with the $\tau$ states, which correspond to the second block column of $L$. Since this column contains $-T/\sqrt2$, a natural choice is to include $T\ket j$ in $\ket{\tau_j}$. Introduce mutually orthogonal auxiliary labels $\mathsf z$, $\mathsf u$, and $\mathsf r$, and set
\begin{equation}
  \ket{\tau_j}
  \defeq
  \ket{\mathsf u}\otimes T\ket j
  =
  \ket{\mathsf u}
  \Bigl(\frac1{\sqrt d}
  \sum_{i'\in\mathcal N(j)}
  \ket{i'}\Bigr)\ket j.
  \label{eq:tau-state}
\end{equation}
To obtain the required inner products with the $\tau$ states, we give the $\ell$ states a $-\ket{\mathsf u}/\sqrt2$ component, together with a $\ket{\mathsf z}/\sqrt2$ component for the identity block. The $\rho$ states use the $\mathsf r$ label to remain orthogonal to the $\tau$ states, and a uniform superposition over the listed positions in row $i$. We choose uniform amplitudes so that the $\rho$ states require only a location query; the coefficients $A_{ij}$ will enter through the $\widetilde\psi$ states. Thus define
\begin{equation}
  \ket{\ell_{ij}}
  \defeq
  \frac{
    \ket{\mathsf z}-\ket{\mathsf u}
  }{\sqrt2}\ket{i,j},
  \qquad
  \ket{\rho_i}
  \defeq
  \ket{\mathsf r}\otimes\mathrm{SWAP}\,T\ket i
  =
  \ket{\mathsf r}\ket i
  \Bigl(\frac1{\sqrt d}
  \sum_{j'\in\mathcal N(i)}
  \ket{j'}\Bigr).
  \label{eq:left-states-preview}
\end{equation}
Here $\mathrm{SWAP}$ exchanges the two index registers.
The $\ell$, $\rho$, and $\tau$ states are normalized. It remains to choose the $\widetilde\psi$ states. The $\mathsf z$ amplitude of $\ket{\widetilde\psi_{ij}}$ should be $1$, so that its inner product with $\ket{\ell_{ij}}$ is $1/\sqrt2$. Its $\mathsf r$ amplitude should be $\sqrt d\,A_{ij}$ to compensate for the factor $1/\sqrt d$ in $\ket{\rho_i}$. Thus define
\begin{equation}
  \ket{\widetilde\psi_{ij}}
  \defeq
  \bigl(\ket{\mathsf z}
  +\sqrt d\,A_{ij}\ket{\mathsf r}\bigr)\ket{i,j}.
  \label{eq:raw-z-state}
\end{equation}
These choices give the required inner products:
\begin{align}
  \braket{\ell_{km}|\widetilde\psi_{ij}}
  &=\frac{\delta_{ki}\delta_{mj}}{\sqrt2},
    \notag
  &\braket{\ell_{km}|\tau_j}
  &=-\frac{
    \delta_{mj}\,\mathbf 1_{\{k\in\mathcal N(j)\}}
  }{\sqrt{2d}},
  \\
  \braket{\rho_k|\widetilde\psi_{ij}}
  &=\delta_{ki}A_{ij},
  &
  \braket{\rho_k|\tau_j}
  &=0.
  \label{eq:raw-overlaps}
\end{align}
With the column ordering in \cref{eq:VL-VR-columns}, the product $V_L^\dagger V_R$ before normalization is
\begin{equation}
\begin{array}{c@{\;\;}c}
  \begin{array}{c|cc}
    & \{\ket{\widetilde\psi_{km}}\}_{k,m}
    & \{\ket{\tau_j}\}_j \\ \hline
    \{\bra{\ell_{ij}}\}_{i,j}
    & \id/\sqrt2 & -T/\sqrt2 \\
    \{\bra{\rho_i}\}_i
    & F^\dagger & 0
  \end{array}
  &
  \raisebox{-1.2ex}{$=L$}
\end{array}
\label{eq:raw-overlap-matrix}
\end{equation}

\para{Normalization.}
The $\ell$ and $\rho$ states together form an orthonormal family, so $V_L$ is already an isometry. The $\widetilde\psi$ and $\tau$ states are also mutually orthogonal, but the $\widetilde\psi$ states need not have unit norm. Indeed, the two terms in \cref{eq:raw-z-state} are orthogonal, giving
\begin{equation}
  q_{ij}
  \defeq
  \norm{\widetilde\psi_{ij}}
  =
  \sqrt{1+d\abs{A_{ij}}^2},
  \qquad
  1\leq q_{ij}\leq\sqrt{d+1}.
  \label{eq:q-definition}
\end{equation}
It remains to scale the $\widetilde\psi$ states to have norm at most $1$, so that $V_R$ is a contraction. Dividing them all by $\sqrt{d+1}$ would suffice, but imposes the worst-case scaling at every pair $(i,j)$, regardless of the size of $A_{ij}$.

The enlarged system allows each vector $\ket{\widetilde\psi_{ij}}$ to be normalized separately, which is not possible in the original construction. Instead, normalize \cref{eq:raw-z-state} by its own norm. That is, prepare
\begin{equation}
  \ket{\psi_{ij}}
  \defeq
  \frac{\ket{\widetilde\psi_{ij}}}{q_{ij}}
  =
  \frac{
    \ket{\mathsf z}
    +\sqrt d\,A_{ij}\ket{\mathsf r}
  }{q_{ij}}\ket{i,j}.
  \label{eq:local-z-state}
\end{equation}
This is a unit vector for every $i,j$. Let
\begin{equation}
  Q\ket{i,j}
  \defeq
  q_{ij}\ket{i,j}.
  \label{eq:Q-definition}
\end{equation}
Dividing each $\ket{\widetilde\psi_{ij}}$ by $q_{ij}$ divides the corresponding column of $L$ by $q_{ij}$, while the $\tau$ columns remain unchanged. Thus $V_L^\dagger V_R=M$, where
\begin{equation}
  M
  \defeq
  L
  \begin{pmatrix}
    Q^{-1}&0\\
    0&\id
  \end{pmatrix}
  =
  \begin{pmatrix}
    Q^{-1}/\sqrt2&-T/\sqrt2\\
    F^\dagger Q^{-1}&0
  \end{pmatrix}.
  \label{eq:Mhat-definition}
\end{equation}

Preparing $\ket{\ell_{ij}}$ requires no oracle queries, while
$\ket{\rho_i}$ and $\ket{\tau_j}$ each require one location query.
Preparing $\ket{\psi_{ij}}$ requires a value query, a controlled
two-level rotation, and an inverse value query. These preparations
implement $V_L$ and $V_R$ coherently, with temporary work registers
uncomputed. Composing their unitary extensions therefore gives an exact
block encoding of $M=V_L^\dagger V_R$ with normalization factor $1$
using $O(1)$ sparse-matrix oracle queries. In particular, $\norm M\leq1$.

To use this block encoding for our original linear system in $L$, we compare
the solutions for the right-hand side $(0,\vec b)$. With
$\vec u=\sqrt d\,A^{-1}\vec b$, \cref{eq:Mhat-definition} and the
formula for $L^{-1}$ in \cref{eq:L-back-substitution} give
\begin{equation}
  L^{-1}
  \begin{pmatrix}
    0\\
    \vec b
  \end{pmatrix}
  =
  \begin{pmatrix}
    T\vec u\\
    \vec u
  \end{pmatrix}
  ,\qquad
  M^{-1}
  \begin{pmatrix}
    0\\
    \vec b
  \end{pmatrix}
  =
  \begin{pmatrix}
    QT\vec u\\
    \vec u
  \end{pmatrix}.
  \label{eq:M-L-equivalence}
\end{equation}
Thus only the top component changes, from $T\vec u$ to $QT\vec u$;
the bottom component remains
proportional to the desired solution $A^{-1}\vec b$. 

Since $M$ can be
block-encoded with $O(1)$ cost, we therefore have two remaining goals:
\begin{enumerate}
  \item Show that $\norm{QT\vec u}$ is within a constant factor of
  $\norm{\vec u}$, so that the bottom component can be recovered from
  the normalized solution with constant probability.
  \item Prove that the condition number $\norm M\norm{M^{-1}}$ is
  $O(\kappa\sqrt d)$. 
\end{enumerate}

\subsection{Recovering the solution}
\label{sec:solution-recovery}

Our goal is to show that $\norm{QT}=O(1)$. Recall that
$Q=\operatorname{diag}(q_{ij})$, where
$q_{ij}=\sqrt{1+d\abs{A_{ij}}^2}$. Equivalently,
$q_{ij}^2=1+d\abs{A_{ij}}^2$.
Although a single $q_{ij}$ may be of order $\sqrt d$, the product $QT$ has
much smaller norm, as the following lemma shows.

\begin{lemma}[The bound on $\norm{QT}$]
\label{lem:Q-bounds}
The diagonal matrix $Q$ satisfies $\norm{QT}\leq\sqrt2$.
\end{lemma}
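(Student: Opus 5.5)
We bound $\norm{QT}^2=\norm{T^\dagger Q^2T}$ by computing the Gram operator $T^\dagger Q^2 T$ directly.

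\textbf{Step 1: $T^\dagger Q^2 T$ is diagonal.} Since $Q^2$ is diagonal in the $\ket{i,j}$ basis and $T\ket j$ is supported only on pairs whose second index is $j$, the vectors $Q T\ket j$ for distinct $j$ have disjoint supports. Hence $T^\dagger Q^2 T$ is diagonal in the $\ket j$ basis. Its $j$th diagonal entry is
\begin{equation}
  \norm{QT\ket j}^2
  =\frac1d\sum_{i\in\mathcal N(j)} q_{ij}^2
  =\frac1d\sum_{i\in\mathcal N(j)}\bigl(1+d\abs{A_{ij}}^2\bigr).
\end{equation}

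\textbf{Step 2: bound each diagonal entry.} The constant terms contribute $\abs{\mathcal N(j)}/d=1$. The remaining terms contribute $\sum_{i\in\mathcal N(j)}\abs{A_{ij}}^2$. Because $\mathcal N(j)$ contains the support of column $j$, this equals $\sum_{i=1}^N\abs{A_{ij}}^2=\norm{A\ket j}^2\leq\norm A^2\leq1$. So each diagonal entry is at most $2$.

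\textbf{Step 3: conclude.} It follows that $\norm{QT}^2=\max_j\norm{QT\ket j}^2\leq2$, giving $\norm{QT}\leq\sqrt2$.

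\textbf{Main obstacle.} The argument hinges on Step 1. The point to verify carefully is that the columns of $QT$ are orthogonal, so that the operator norm reduces to the maximum column norm rather than requiring a Schur-test-type bound. This orthogonality follows from $T$ retaining $j$ in its second register. Everything else uses only that each column of $A$ has norm at most $\norm A\leq1$. The individual entries $q_{ij}$ may be as large as $\sqrt{d+1}$, but a large $q_{ij}$ is averaged against the $1/d$ weight from $T$.
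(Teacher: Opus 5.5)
Your proposal is correct and follows essentially the same route as the paper: both use that the columns $QT\ket j$ have disjoint supports, so $\norm{QT}^2$ is the largest squared column norm. Each such norm equals $1+\sum_{i\in\mathcal N(j)}\abs{A_{ij}}^2\leq 2$ because the columns of $A$ have norm at most $\norm A\leq 1$. Phrasing this through the diagonal Gram operator $T^\dagger Q^2T$ is just a restatement of the paper's column-orthogonality step.
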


\begin{proof}
The $j$th column of $QT$ is
\begin{equation}
  QT\ket j
  =
  \frac1{\sqrt d}
  \sum_{i\in\mathcal N(j)}q_{ij}\ket{i,j}.
  \label{eq:QT-coordinates}
\end{equation}
Different columns have disjoint support, so they are orthogonal.
Therefore the squared spectral norm is the largest squared column norm:
\begin{align}
  \norm{QT}^2
  =\max_j\norm{QT\ket j}^2
  =\max_j\frac1d\sum_{i\in\mathcal N(j)}q_{ij}^2
  =1+ \max_j \sum_{i\in\mathcal N(j)}\abs{A_{ij}}^2 
  \leq2.
  \label{eq:QT-calculation}
\end{align}
The last inequality follows because each column of $A$ has 
Euclidean norm at most $\norm A\leq1$.
\end{proof}

By the lemma, measuring the bottom component of the normalized state
proportional to $\binom{QT\vec u}{\vec u}$ from
\cref{eq:M-L-equivalence} succeeds with probability
\begin{equation}
  \frac1{1+\norm{QT}^2}
  \geq\frac13.
  \label{eq:readout-probability}
\end{equation}
Conditioned on this outcome,
the state is the desired normalized solution
$A^{-1}\vec b/\norm{A^{-1}\vec b}$.
For approximate solutions, constant-probability postselection increases trace-distance error by only a constant factor. Reducing the solver's error by a constant factor and repeating $O(1)$ times therefore gives error at most $\eps$ with success probability at least $1/2$.

\subsection{Bound on the condition number of \texorpdfstring{$M$}{M}}
\label{sec:condition-number}

Since $M=V_L^\dagger V_R$ has $\norm M\leq1$, it suffices to show
that $\norm{M^{-1}}=O(\kappa\sqrt d)$ to obtain the desired
condition-number bound. We use $\norm{QT}\leq\sqrt2$ from
\Cref{lem:Q-bounds} and the bound $\norm Q\leq\sqrt{2d}$, which
follows from
\begin{equation}
  q_{ij}^2
  =1+d\abs{A_{ij}}^2
  \leq d+1
  \leq2d.
  \label{eq:q-worst-case}
\end{equation}

\begin{proposition}[Condition number]
\label{prop:condition-number}
The matrix $M$ is invertible and satisfies
\begin{equation}
  \norm{M^{-1}}\leq5\kappa\sqrt d.
  \label{eq:Mhat-inverse-bound}
\end{equation}
Its condition number is therefore at most $5\kappa\sqrt d$.
\end{proposition}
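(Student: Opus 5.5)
The plan is to write $M^{-1}$ explicitly and bound it block by block. Since $M = L\,\mathrm{diag}(Q^{-1},\id)$ and $L$ is invertible by the earlier proposition, $M$ is invertible with
\begin{equation}
  M^{-1}=\begin{pmatrix} Q&0\\0&\id\end{pmatrix}L^{-1}
  =\begin{pmatrix}
    \sqrt2\,Q\bigl(\id-\sqrt d\,TA^{-1}F^\dagger\bigr) & \sqrt d\,QTA^{-1}\\
    -\sqrt{2d}\,A^{-1}F^\dagger & \sqrt d\,A^{-1}
  \end{pmatrix},
\end{equation}
using the formula for $L^{-1}$ in \cref{eq:L-back-substitution}. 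Three of the four blocks are easy. The bottom-left block has norm at most $\sqrt{2d}\,\kappa$, since $\norm{F^\dagger}\leq1$. The bottom-right block has norm at most $\sqrt d\,\kappa$. By \Cref{lem:Q-bounds}, the top-right block has norm at most $\sqrt d\,\norm{QT}\,\kappa\leq\sqrt{2d}\,\kappa$.

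The main obstacle is the top-left block $\sqrt2\,Q-\sqrt{2d}\,QTA^{-1}F^\dagger$. Its second term is again controlled by \Cref{lem:Q-bounds}: its norm is at most $\sqrt{2d}\cdot\sqrt2\cdot\kappa=2\kappa\sqrt d$. The first term $\sqrt2\,Q$ is bounded using \cref{eq:q-worst-case}, $\norm Q\leq\sqrt{2d}$, which gives $2\sqrt d\leq2\kappa\sqrt d$ because $\kappa\geq1$. This step is where the loss from the worst-case $q_{ij}$ appears, but it is only $O(\sqrt d)$, which is dominated by $\kappa\sqrt d$. So the top-left block has norm at most $4\kappa\sqrt d$.

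To get the constant $5$ rather than a crude sum of the four block norms, I would not apply the triangle inequality to the whole matrix. Instead, for a unit vector $(\vec v,\vec w)$, I would bound the top output component by $\norm{\text{top-left}}\norm{\vec v}+\norm{\text{top-right}}\norm{\vec w}$ and the bottom output component similarly. This gives
\begin{equation}
  \norm{M^{-1}}\leq\left\|\begin{pmatrix}4&\sqrt2\\\sqrt2&1\end{pmatrix}\right\|\kappa\sqrt d .
\end{equation}
The largest eigenvalue of this $2\times2$ matrix is $(5+\sqrt{17})/2\approx4.56\leq5$. If a block bound comes out slightly larger, I would tighten it by handling the top-left block more carefully, writing $Q(\id-\sqrt d\,TA^{-1}F^\dagger)$ as a complementary-projector-like expression. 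Finally, the condition number bound follows from $\norm M\leq1$.
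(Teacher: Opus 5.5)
Your proof is correct and takes essentially the paper's route. Both use $M^{-1}=\operatorname{diag}(Q,\id)L^{-1}$ with the same ingredients: $\norm{QT}\leq\sqrt2$, $\norm Q\leq\sqrt{2d}$, $\norm{F^\dagger}\leq1$ and $\kappa\geq1$. The only difference is bookkeeping. The paper writes $M^{-1}=\sqrt d\binom{QT}{\id}A^{-1}\bigl(-\sqrt2F^\dagger\;\;\id\bigr)+\sqrt2\operatorname{diag}(Q,0)$ to get $3\kappa\sqrt d+2\sqrt d$, whereas your block-norm matrix gives $\tfrac{5+\sqrt{17}}{2}\kappa\sqrt d$, so your closing hedge about tightening the top-left block is unnecessary.
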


\begin{proof}
The matrix $M$ is invertible by \cref{eq:Mhat-definition}, since
$L$ and $Q$ are invertible.
Using the formula for $L^{-1}$ in \cref{eq:L-back-substitution},
\begin{equation}
  \begin{aligned}
    M^{-1}
    &=\begin{pmatrix}Q&0\\0&\id\end{pmatrix}L^{-1}=\begin{pmatrix}Q&0\\0&\id\end{pmatrix}
    \begin{pmatrix}
      \sqrt2\bigl(\id-\sqrt d\,TA^{-1}F^\dagger\bigr)
        &\sqrt d\,TA^{-1}\\
      -\sqrt{2d}\,A^{-1}F^\dagger
        &\sqrt d\,A^{-1}
    \end{pmatrix}\\
    &=\sqrt d
    \begin{pmatrix}QT\\\id\end{pmatrix}
    A^{-1}
    \begin{pmatrix}-\sqrt2F^\dagger&\id\end{pmatrix}
    +\sqrt2\begin{pmatrix}Q&0\\0&0\end{pmatrix}.
  \end{aligned}
  \label{eq:M-inverse-factorization}
\end{equation}
The matrices $\begin{pmatrix}QT\\\id\end{pmatrix}$ and
$\begin{pmatrix}-\sqrt2F^\dagger&\id\end{pmatrix}$ each have norm at most
$\sqrt3$, since $\norm{QT}\leq\sqrt2$ and $\norm{F^\dagger}\leq1$.
Therefore
\begin{equation}
  \norm{M^{-1}}
  \leq3\sqrt d\,\norm{A^{-1}}+\sqrt2\norm Q
  \leq3\kappa\sqrt d+2\sqrt d
  \leq5\kappa\sqrt d.
  \label{eq:five-condition-bound}
\end{equation}
The last inequality uses $\kappa\geq1$.
\end{proof}

This completes the proof of \Cref{thm:sparse-reduction}.

\section{Matching lower bound for the QLSP}\label{sec:lower_bound}

In this section, we derive a lower bound that matches the query complexity of our quantum algorithm.

\begin{theorem}[Lower bound]
\label{thm:main_lower_bound}
Any quantum algorithm that solves the QLSP must make
\begin{equation}
  \Omega(\kappa\sqrt d\log(1/\eps)) \ \text{queries.}
\end{equation}
\end{theorem}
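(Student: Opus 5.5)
The plan is to reduce a composed Boolean problem, the parity of $R$ independent search (OR) instances each of size $\Theta(d)$ with $R=\Theta(\kappa\log(1/\eps))$, to the QLSP, and then lower-bound that composed problem with the quantum XOR lemma of Lee and Roland~\cite{LR12}.

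\textbf{Step 1: circuit-to-matrix.} Given a circuit $U_1,\ldots,U_T$ acting on $\ket{\psi_0}$, I will use the history-state matrix
\[
A=\id-\gamma W,\qquad W=\sum_t \ket{t+1}\!\bra t\otimes U_t,
\]
with $\gamma=1-\Theta(1/\kappa)$ and the clock truncated or padded to length $T'\gg\kappa$. This matrix satisfies $\norm A\le 2$ and $\norm{A^{-1}}\le\sum_k\gamma^k=O(\kappa)$, and it can be rescaled and Hermitized as in~\cite{HHL09}. The solution of $A\vec x=\ket{0}\ket{\psi_0}$ is
\[
\vec x=\sum_t \gamma^t\ket t\otimes U_t\cdots U_1\ket{\psi_0}.
\]
If $U_t$ acts as $O(1)$ "gates" each touching a $d$-sparse oracle-dependent structure, then $A$ is $O(d)$-sparse. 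Each entry query to $A$ costs $O(1)$ queries to the underlying Boolean input.

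\textbf{Step 2: the embedded circuit.} The circuit processes $R$ blocks. In block $r$ it applies a $d$-sparse gate whose entries encode search instance $r$. My intended choice is a unitary that maps a fixed state to a state depending on whether $x^{(r)}$ has a marked element, e.g. the reflection $\id-2\sum_{j:x^{(r)}_j=1}\ket j\!\bra j$ applied after a uniform superposition over $d$ positions. Combined with a phase kickback, this flips a parity qubit with constant bias in a single matrix step. Block $r$ must be realizable as one sparse matrix row-step, with all the $\sqrt d$ hardness in the hidden location of the marked entry.

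After $R$ steps, the parity qubit holds $\bigoplus_r \mathrm{OR}(x^{(r)})$. I then idle with identity for $\Theta(\kappa)$ steps so that the clock-register weight sits on the final time segment.

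The weight of the history state on time steps $t\ge R$ is about $\gamma^{2R}/\sum_t\gamma^{2t}\approx e^{-\Theta(R/\kappa)}$. With $R=c\kappa\log(1/\eps)$ for small $c$, this weight is at least $\eps^{1/2}\gg\eps$. An $\eps$-accurate output state therefore still lets one measure the clock, see $t\ge R$ with probability $\delta\approx\eps^{c'}$, and read off the parity. Conditioned on anything else, one guesses randomly. The result is advantage $\delta=\eps^{\Theta(1)}$ over random for the parity.

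\textbf{Step 3: the XOR lemma.} For $\mathrm{OR}_d$ the adversary bound is $\Theta(\sqrt d)$. The Lee--Roland strong direct product/XOR theorem for $\bigoplus^R f$ (they use the multiplicative adversary method) states that any algorithm with fewer than $cR\cdot\mathrm{Adv}(f)$ queries has advantage $2^{-\Omega(R)}$. Taking $R=\Theta(\kappa\log(1/\eps))$ with constants chosen so that $2^{-\Omega(R)}<\delta=\eps^{\Theta(1)}$, any QLSP solver using $Q$ queries yields a parity algorithm using $O(Q)$ queries with advantage $\delta$. This forces $Q=\Omega(R\sqrt d)=\Omega(\kappa\sqrt d\log(1/\eps))$. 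For the remaining parameter regimes (constant $\eps$ or constant $d$), the separate bounds~\cite{MKBR26,CABB25} cover the cases.

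\textbf{Main obstacle.} The delicate step is the constant bookkeeping between the exponent $e^{-\Theta(R/\kappa)}$ of history-state weight and the XOR-lemma advantage $2^{-\Omega(R)}$: both decay exponentially in $R$, so I must choose $R/\kappa$ as a small constant times $\log(1/\eps)$. I must also make sure the lemma's regime, with $R$ large relative to the advantage exponent, is available. Additional care is needed that the sparse gate per block is unitary, exactly $O(d)$-sparse, and that a single OR evaluation costs only one or $O(1)$ history steps, so that no extra factors enter $\kappa$.
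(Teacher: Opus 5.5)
Your overall architecture matches the paper's. You build a geometrically weighted history state; your truncated clock with a $\Theta(\kappa)$-step idle segment is an acceptable substitute for the paper's cyclic compute--idle--uncompute clock. You take the parity of $R=\Theta(\kappa\log(1/\eps))$ bits, each carrying $\Omega(\sqrt d)$ search hardness, and use the Lee--Roland XOR lemma to absorb an $\eps^{\Theta(1)}$ advantage, with the same constant bookkeeping.

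The gap is Step 2, which is exactly where the $\sqrt d$ factor has to come from. Your gadget, a uniform superposition followed by the diagonal reflection $\id-2\sum_{j:x_j=1}\ket{j}\!\bra{j}$, is just one standard phase query to $x^{(r)}$. For $x^{(r)}=0$ versus a single marked item, the two resulting states have overlap $1-2/d$. So the bias toward $\mathrm{OR}(x^{(r)})$ is $O(1/\sqrt d)$, and no phase kickback can beat what a single query allows. Moreover, even a genuinely constant bias $\beta<1$ per block would not suffice. The errors compound over the $R$ blocks and leave an advantage of order $\beta^{R}=\eps^{\Theta(\kappa)}$, far below the permitted trace-distance error $\eps$. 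This also conflicts with your later claim that the parity qubit holds $\bigoplus_r\mathrm{OR}(x^{(r)})$. You need each block to compute its bit \emph{exactly} in $O(1)$ clock steps. It must use a $d$-sparse matrix whose location lists do not depend on the input and each of whose entries costs $O(1)$ input queries, yet which is expensive to apply.

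The missing idea is to hide the marked position in the off-diagonal entries of a permutation matrix. This requires a search variant with exactly one marked item. The paper uses $\LR_D$ (is the unique marked position in the left or right half?), which still has $Q(\LR_D)=\Omega(\sqrt D)$. It pairs this with the cyclic shift $P_\sigma$, where $(P_\sigma)_{rk}=y_{r-k\bmod D}$. Every row lists all $D$ columns, so location queries reveal nothing, and each value query costs $O(1)$ queries to $y$. Yet one application maps $\ket{0}\mapsto\ket{\sigma}$. Applying $P_\sigma$, then a fixed gate XORing $\operatorname{msb}(\sigma)$ into an answer qubit, then $P_\sigma^\dagger$, computes $\LR_D(y)$ exactly in three steps. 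A length-$3R$ circuit therefore computes $\XOR_R\circ\LR_D$ exactly, and your Steps 1 and 3 go through with $Q(\LR_D)$ in place of $\mathrm{Adv}(\mathrm{OR}_d)$.

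This trick does not transfer to $\mathrm{OR}$ as you set it up. Consider a unitary $U$ sending $\ket{0}$ to $\ket{\sigma}$ when $\sigma$ is marked and to a fixed $\ket{\phi_0}$ when nothing is marked. Its entries $\bra{k}U\ket{0}$ equal $\braket{k|\phi_0}\neq 0$ on the all-zero input but become $0$ once any $\sigma\neq k$ is marked. Those entries therefore depend on the whole input, not on $O(1)$ bits.
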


\subsection{Circuit histories as linear systems}\label{sec:circuit}

Similar to the lower bound of $\Omega(\kappa)$ established in \cite{HHL09}, we begin with the circuit-to-matrix construction that drives our result. The construction is a cyclic version of the usual history-state idea. Given a circuit, we build a unitary clock evolution that first applies the circuit, then idles while the output is stored, and finally uncomputes. The inverse of a simple linear operator built from this clock unitary prepares a geometrically weighted history state. The useful part of this history state is the idle interval, during which the data register contains the circuit output.

Let $C=G_L\cdots G_1$ be a quantum circuit on a Hilbert space $\mathcal H$, and let $\ket{\psi_0}\in\mathcal H$ be its input state. We write
\begin{equation}
  \ket{\psi_{\mathrm{out}}}
  \defeq
  G_L\cdots G_1\ket{\psi_0}.
\end{equation}
The clock register has orthonormal basis $\ket{1},\ldots,\ket{3L}$, with the convention that $\ket{3L+1}=\ket1$. Define the clock unitary
\begin{equation}
  U
  \defeq
  \sum_{\ell=1}^{L}
  \ket{\ell+1}\bra{\ell}\otimes G_\ell
  +
  \sum_{\ell=1}^{L}
  \ket{L+\ell+1}\bra{L+\ell}\otimes I
  +
  \sum_{\ell=1}^{L}
  \ket{2L+\ell+1}\bra{2L+\ell}\otimes G_{L-\ell+1}^{\dagger}.
  \label{eq:cyclic-clock-unitary}
\end{equation}
Thus the first $L$ clock steps compute $C$, the next $L$ clock steps idle, and the final $L$ clock steps uncompute $C$. 

For $\kappa>2$, set
\begin{equation}
  \alpha
  \defeq
  1-\frac{2}{\kappa},
  \qquad
  A
  \defeq
  \frac12\left(I-\alpha U\right),
  \qquad
  \ket b
  \defeq
  \ket1\ket{\psi_0}.
  \label{eq:clock-matrix-A}
\end{equation}
Thus $U$ and $A$ are square matrices with $3L\dim(\mathcal H)$ rows and the same number of columns.
The following lemma describes the properties of this construction in the form needed later.

\begin{lemma}[Circuit-to-matrix construction]
\label{lem:circuit-to-matrix}
Let $G_1,\ldots,G_L$ be unitaries on $\mathcal H$. Let $U$, $A$, and $\ket b$ be defined by \cref{eq:cyclic-clock-unitary,eq:clock-matrix-A}. Then $U$ is unitary, and
\begin{equation}
  \norm A\le 1,
  \qquad
  \norm{A^{-1}}\le \kappa.
\end{equation}
Moreover, if
\begin{equation}
  \ket x
  \defeq
  \frac{A^{-1}\ket b}{\norm{A^{-1}\ket b}},
\end{equation}
then measuring the clock register of $\ket x$ gives an outcome in $\{L+1,\ldots,2L\}$ with probability at least
\begin{equation}
  p
  \ge
  \frac{\alpha^{4L}}{3}.
  \label{eq:clock-success-main}
\end{equation}
Conditioned on this event, the data register is exactly $\ket{\psi_{\mathrm{out}}}$.
\end{lemma}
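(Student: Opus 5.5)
The plan is to verify the claims in order: unitarity of $U$, the norm bounds on $A$, an explicit Neumann-series formula for $A^{-1}\ket b$, and finally the clock-measurement probability.

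\emph{Unitarity.} $U$ maps each clock block $\ket\ell\otimes\mathcal H$ unitarily onto $\ket{\ell+1}\otimes\mathcal H$, cyclically mod $3L$. Each clock basis state has exactly one preimage and one image, and the data-register actions $G_\ell$, $I$, $G_{L-\ell+1}^\dagger$ are unitary. So $U$ is a permutation of clock blocks tensored with block-wise unitaries, hence unitary.

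\emph{Norm bounds.} The spectrum of $U$ lies on the unit circle. Since $A=\tfrac12(I-\alpha U)$ is normal, its eigenvalues are $\tfrac12(1-\alpha\lambda)$ with $|\lambda|=1$, and these have modulus in $[\tfrac12(1-\alpha),\tfrac12(1+\alpha)]$. This gives $\norm A\le 1$ and $\norm{A^{-1}}\le 2/(1-\alpha)=\kappa$.

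\emph{History state.} Because $\alpha<1$ and $\norm U=1$, the Neumann series converges:
\[
A^{-1}\ket b=2\sum_{t\ge0}\alpha^tU^t\ket1\ket{\psi_0}.
\]
Write $t=3Lk+s$ with $0\le s<3L$. One full cycle applies $C$, then idles, then applies $C^\dagger$, so $U^{3L}$ acts as the identity on $\ket1\ket{\psi_0}$. For $0\le s\le L$ we have $U^s\ket1\ket{\psi_0}=\ket{s+1}\ket{\psi_s}$, where $\ket{\psi_s}$ is the intermediate state after $s$ gates. For $L\le s\le 2L$ the data register holds exactly $\ket{\psi_{\mathrm{out}}}$ and the clock is $s+1$. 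The last third uncomputes. Summing the geometric series over $k$ gives
\[
A^{-1}\ket b=\frac{2}{1-\alpha^{3L}}\sum_{s=0}^{3L-1}\alpha^s\ket{s+1}\ket{\phi_s},
\]
where the $\ket{\phi_s}$ are unit vectors and the terms are orthogonal because their clock labels differ. This already shows that, conditioned on clock outcome $\ket{s+1}\in\{L+1,\dots,2L\}$ (that is, $s\in\{L,\dots,2L-1\}$), the data register is exactly $\ket{\psi_{\mathrm{out}}}$.

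\emph{Probability bound.} The probability of the idle window is
\[
p=\frac{\sum_{s=L}^{2L-1}\alpha^{2s}}{\sum_{s=0}^{3L-1}\alpha^{2s}}.
\]
The numerator has $L$ terms, each at least $\alpha^{4L}$. The denominator has $3L$ terms, each at most $1$. Hence $p\ge L\alpha^{4L}/(3L)=\alpha^{4L}/3$.

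\emph{Main obstacle.} The only delicate point is the index bookkeeping. One must check that clock labels $L+1,\dots,2L$ correspond exactly to exponents $s=L,\dots,2L-1$, and that the convention $\ket{3L+1}=\ket1$ makes $U^{3L}$ return the state to $\ket1\ket{\psi_0}$. With that settled, each remaining step is a direct computation.
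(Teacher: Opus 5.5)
Your proof is correct and follows essentially the same route as the paper's: unitarity from the cyclic clock structure; the bounds $\norm A\le(1+\alpha)/2\le1$ and $\norm{A^{-1}}\le 2/(1-\alpha)=\kappa$; the Neumann series collapsed via $U^{3L}\ket b=\ket b$; and the same term-counting estimate $p\ge\alpha^{4L}/3$. The one small difference is that you get the norm bounds from normality of $A$ and its eigenvalues $\tfrac12(1-\alpha\lambda)$, whereas the paper uses the triangle and reverse triangle inequalities, which is equivalent here; your explicit normalization $2/(1-\alpha^{3L})$ and your index bookkeeping ($s\in\{L,\dots,2L-1\}$ for clock labels $L+1,\dots,2L$) are also correct.
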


\begin{proof}
The operator $U$ advances the clock by one step and applies a unitary to the data register at each step. Distinct input clock subspaces are mapped isometrically to distinct output clock subspaces, and the cyclic convention $\ket{3L+1}=\ket1$ closes the evolution. Hence $U$ is unitary.

Since $U$ is unitary and $0<\alpha<1$,
\begin{equation}
  \norm A
  =
  \frac12\norm{I-\alpha U}
  \le
  \frac12\left(1+\alpha\right)
  \le
  1.
\end{equation}
For every vector $\ket v$,
\begin{equation}
  \norm{\left(I-\alpha U\right)\ket v}
  \ge
  \norm{\ket v}-\alpha\norm{U\ket v}
  =
  (1-\alpha)\norm{\ket v}.
\end{equation}
Thus every singular value of $I-\alpha U$ is at least $1-\alpha$, and every singular value of $A$ is at least
\begin{equation}
  \frac{1-\alpha}{2}
  =
  \frac1\kappa.
\end{equation}
Therefore $\norm{A^{-1}}\le\kappa$.

We now compute the solution state. Since $\norm{\alpha U}<1$, the Neumann series gives
\begin{equation}
  A^{-1}\ket b
  =
  2\left(I-\alpha U\right)^{-1}\ket b
  =
  2\sum_{t=0}^{\infty}\alpha^t U^t\ket b.
  \label{eq:infinite-neumann}
\end{equation}
Starting from $\ket b=\ket1\ket{\psi_0}$, the cyclic clock returns to its starting state after $3L$ steps, so $U^{3L}\ket b=\ket b$. Grouping the terms in \cref{eq:infinite-neumann} by their residue modulo $3L$ gives
\begin{equation}
  A^{-1}\ket b
  \propto
  \sum_{t=0}^{3L-1}\alpha^t U^t\ket b.
  \label{eq:finite-history-state}
\end{equation}
For $L\le t<2L$, the clock lies in the idle interval and the data register is $\ket{\psi_{\mathrm{out}}}$. The states $U^t\ket b$ for $0\le t<3L$ have distinct clock labels and are therefore mutually orthogonal. Hence the probability of observing the middle clock interval is
\begin{equation}
  p
  =
  \frac{\sum_{t=L}^{2L-1}\alpha^{2t}}
       {\sum_{t=0}^{3L-1}\alpha^{2t}}.
\end{equation}
The numerator contains $L$ terms, each at least $\alpha^{4L}$, and the denominator contains $3L$ terms, each at most $1$. Therefore
\begin{equation}
  p
  \ge
  \frac{L\alpha^{4L}}{3L}
  =
  \frac{\alpha^{4L}}3.
\end{equation}
Conditioned on a middle clock outcome, the data register is exactly $\ket{\psi_{\mathrm{out}}}$.
\end{proof}

\subsection{Warm-up lower bounds}

Before proving the full joint lower bound, we isolate the two basic sources of hardness. The first source is sparse unstructured search, which produces the factor $\sqrt d$. The second is the decay of the clock history, which couples the condition number $\kappa$ to the output precision $\eps$ and produces the factor $\kappa\log(1/\eps)$. These warm-ups are not needed as separate theorems in the final proof, but they explain why the final construction has its particular form.

\subsubsection{Sparsity from unstructured search}
\label{subsec:warmup-sparsity}

Assume first that $\kappa$ and $\eps$ are constants. Let $D$ be a power of two, and consider a hidden shift $\sigma\in\{0,\ldots,D-1\}$. Let $P_\sigma$ be the cyclic shift
\begin{equation}
  P_\sigma\ket j
  =
  \ket{j+\sigma \bmod D}.
\end{equation}
Starting from the input state $\ket0$, a single application of $P_\sigma$ gives $\ket\sigma$. Thus learning the output of this one-gate circuit is equivalent to finding the unique marked item in an unstructured search problem of size $D$, which requires $\Omega(\sqrt D)$ queries~\cite{BBBV97}.

In the standard sparse-access model, for each row, the location oracle lists all $D$ columns. Thus the location oracle contains no information about $\sigma$. The input dependence is in the value oracle. If $y\in\{0,1\}^D$ has Hamming weight one and its unique $1$ is at $\sigma$, then
\begin{equation}
  (P_\sigma)_{rk}
  =
  y_{r-k\bmod D}.
\end{equation}
Thus a value query to $P_\sigma$ can be simulated by one query to $y$.

Applying \Cref{lem:circuit-to-matrix} with $L=1$ embeds this one-gate circuit into a QLSP instance. If $\alpha$ is a fixed constant, then the probability of measuring the useful clock interval is also a fixed constant. Therefore an algorithm that outputs an $\eps$-close approximation to the QLSP solution state, for sufficiently small constant $\eps$, would allow us to recover $\sigma$ with bounded probability. This gives the basic $\Omega(\sqrt d)$ sparsity lower bound, up to the harmless additive shift between $D$ and the final sparsity parameter $d$.

\subsubsection{Condition number and precision from parity}
\label{subsec:warmup-kappa-epsilon}

We now show how the same clock construction gives the dependence $\kappa\log(1/\eps)$, even for constant sparsity. Let
\begin{equation}
  T
  =
  \left\lfloor
  \frac{\kappa\ln(1/\eps)}{100}
  \right\rfloor.
\end{equation}
Given oracle access to a bit string $z\in\{0,1\}^T$, consider the one-qubit circuit that applies $I$ if $z_i=0$ and $X$, the NOT gate, if $z_i=1$ at the $i$th step. The final state is
\begin{equation}
  \ket{\psi_{\mathrm{out}}}
  =
  \ket{\XOR_T(z)}.
\end{equation}
The circuit has length $T$, so in the notation of \Cref{lem:circuit-to-matrix} we have $L=T$. By this lemma, the useful clock probability is at least
\begin{equation}
  p
  \ge
  \frac{\alpha^{4T}}3,
  \qquad
  \alpha=1-\frac2\kappa.
\end{equation}
For sufficiently large $\kappa$,
\begin{equation}
  \alpha^{4T}
  =
  \left(1-\frac2\kappa\right)^{4T}
  \ge
  \eps^{4/25}.
\end{equation}
Thus the useful clock probability is much larger than $\eps$.

If an exact QLSP solver were available, we could measure the clock and, when the clock lies in the middle interval, read out the parity. If the clock is not in the middle interval, we output a uniformly random bit. This gives a parity algorithm with success probability $1/2+p/2$. If the QLSP solver produces a state within trace distance $\eps$ of the solution $\ket{x}$, the success probability of this two-outcome procedure changes by at most $\eps$. Since $p\gg\eps$, we still obtain an algorithm for parity with success probability strictly larger than $1/2$.
With a flagged solver, outputting a uniformly random bit on failure reduces this positive advantage by at most a factor of two.

Computing parity of $T$ input bits with any success probability strictly larger than $1/2$ requires $\Omega(T)$ queries~\cite{FGGS98,BBC01}. Hence any QLSP solver in this regime must make
\begin{equation}
  \Omega(T)
  =
  \Omega\left(\kappa\log(1/\eps)\right)
\end{equation}
queries.

\subsubsection{Toward the joint lower bound}
\label{subsec:warmup-product}

Our result combines these two warm-ups. We use one search instance to obtain a Boolean bit, and then take the parity of $R$ such bits. The search part contributes $\sqrt D$, while the parity length
\begin{equation}
  R
  =
  \Theta\left(\kappa\log(1/\eps)\right)
\end{equation}
is chosen so that the clock signal remains larger than the allowed trace-distance error.

There is one additional issue. The probability of the useful clock event decays exponentially in $R/\kappa$, so the Boolean algorithm obtained from a QLSP solver may have only a small advantage over random guessing. To turn such a small advantage into a query lower bound, we use the quantum XOR lemma of Lee and Roland~\cite{LR12}. This is the Boolean hardness input introduced next.

\subsection{Boolean source of hardness}
\label{sec:boolean-hardness}

We now define the Boolean promise problem embedded into the QLSP instance. Let $D$ be an even positive integer. The problem $\LR_D$ is defined on strings $y\in\{0,1\}^D$ of Hamming weight exactly one. If the unique $1$ lies in the left half of the string, then $\LR_D(y)=0$; if it lies in the right half, then $\LR_D(y)=1$. Equivalently,
\begin{equation}
  \LR_D(y)
  =
  \begin{cases}
    0, & \text{if the unique marked position is in } \{0,\ldots,D/2-1\},\\
    1, & \text{if the unique marked position is in } \{D/2,\ldots,D-1\}.
  \end{cases}
\end{equation}
The standard quantum lower bound for unstructured search~\cite{BBBV97} implies
\begin{equation}
  Q(\LR_D)
  =
  \Omega(\sqrt D).
  \label{eq:LR-lower-bound}
\end{equation}

For $R\ge1$, define the composed promise problem
\begin{equation}
  F_{R,D}
  \defeq
  \XOR_R\circ\LR_D.
\end{equation}
An input is a tuple $y=(y^1,\ldots,y^R)$, where each $y^i\in\{0,1\}^D$ has Hamming weight one, and
\begin{equation}
  F_{R,D}(y^1,\ldots,y^R)
  =
  \XOR_R\left(\LR_D(y^1),\ldots,\LR_D(y^R)\right).
  \label{eq:FRD-definition}
\end{equation}

We need the hardness of $F_{R,D}$ even when the algorithm is allowed error close to $1/2$. We use the quantum XOR lemma of Lee and Roland~\cite{LR12}.

\begin{lemma}[Small-advantage lower bound for $F_{R,D}$]
\label{lem:xor-lower-bound}
For every fixed constant $0<c<1/2$,
\begin{equation}
  Q_{1/2-c^R}(F_{R,D})
  =
  \Omega(R\sqrt D),
  \label{eq:FRD-small-advantage}
\end{equation}
where $Q_\eta(f)$ denotes the query complexity of computing $f$ with error at most $\eta$.
\end{lemma}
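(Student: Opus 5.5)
We will derive \Cref{lem:xor-lower-bound} from the strong quantum XOR lemma of Lee and Roland~\cite{LR12}. In the form we need, it says the following. Let $f$ be a (partial) Boolean function with $Q_{\delta}(f)\geq q$ for some constant error $\delta<1/2$. Then there are constants $c_0>0$ and $\beta\in(0,1)$, depending only on $\delta$, such that any algorithm computing $\XOR_R\circ f$ with $T\leq c_0 Rq$ queries succeeds with probability at most $1/2+\beta^R$, i.e.\ has advantage $2^{-\Omega(R)}$. Their proof goes through the multiplicative adversary method and applies to partial functions. This matters here, because $\LR_D$ is a promise problem whose inputs have Hamming weight exactly one.

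The first step is to obtain a multiplicative adversary bound for $\LR_D$ of order $\sqrt D$. The constant-error bound \cref{eq:LR-lower-bound} is not enough on its own; the XOR lemma needs the stronger multiplicative adversary certificate. For unstructured search, and hence for $\LR_D$, such a certificate is standard. Ambainis's strong direct-product and threshold results for search, and Lee--Roland's treatment of search, show that $T=o(\sqrt D)$ queries give success probability only $1/2+o(1)$. Concretely, we will invoke the statement from~\cite{LR12} that the multiplicative adversary bound is within a constant factor of the ordinary adversary bound for functions like search. We will then check that the reduction from search to $\LR_D$ preserves this: with fewer than $\epsilon\sqrt D$ queries, the algorithm's state is nearly independent of the marked position, so its success probability is at most $1/2+O(\epsilon)$.

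The second step applies the XOR lemma to $f=\LR_D$. Any algorithm using fewer than $c_0R\sqrt D$ queries then computes $F_{R,D}$ with success probability at most $1/2+\beta^R$. Here $\beta$ is some fixed constant below $1$, determined by the constants in the lemma.

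The third step matches the bias $c^R$ in \cref{eq:FRD-small-advantage}, and this is the main obstacle. The statement must hold for every fixed constant $c<1/2$, whereas the XOR lemma only supplies some $\beta$ that might be larger than $c$. The fix is to trade the multiplicative constant in the query bound against the base of the exponent. In the Lee--Roland theorem, a per-copy query budget of $\gamma\sqrt D$ gives advantage at most $(\beta(\gamma))^R$, where $\beta(\gamma)\to0$ as $\gamma\to0$, since the multiplicative adversary progress per query is small.

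So, given $c$, we choose $\gamma$ small enough that $\beta(\gamma)<c$. Then any algorithm with fewer than $\gamma R\sqrt D$ queries has success probability below $1/2+c^R$, and therefore cannot achieve error $1/2-c^R$. This yields $Q_{1/2-c^R}(F_{R,D})\geq\gamma R\sqrt D=\Omega(R\sqrt D)$, with the hidden constant depending on $c$.

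If the stated form of~\cite{LR12} does not give this tunable dependence directly, the fallback is a blocking argument. Group the $R$ copies into blocks of constant size $k$, and treat each block as a single XOR instance with a per-copy budget of order $\sqrt D$. Choosing $k$ large enough that $\beta^{k}<c^{k}$ after the constant adjustment, we then apply the lemma to the outer XOR over $R/k$ blocks, at the cost of a constant factor in $R$.
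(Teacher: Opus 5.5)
Your route is the paper's: its proof just reads off from Lee and Roland that $Q_{1/2-c^R}(\XOR_R\circ f)=\Omega(R\,Q(f))$ for every partial Boolean $f$ and every fixed $c$ (the tunable-base form you invoke in your third step), applies it to $f=\LR_D$ with $Q(\LR_D)=\Omega(\sqrt D)$, and your Step~1 is unnecessary because Lee and Roland prove the multiplicative adversary bound dominates the additive one for every partial function, so the bounded-error bound already suffices. Drop the blocking fallback, which cannot work: $\beta^k<c^k$ does not depend on $k$, and applying the XOR lemma to $R/k$ outer blocks only gives the weaker bias bound $\beta^{R/k}$.
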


\begin{proof}
The Lee--Roland XOR lemma implies that, for every partial Boolean function $f$ and every fixed $0<c<1/2$,
\begin{equation}
  Q_{1/2-c^R}(\XOR_R\circ f)
  =
  \Omega(RQ(f)).
\end{equation}
Applying this with $f=\LR_D$ and using \cref{eq:LR-lower-bound} proves the claim.
\end{proof}

\subsection{The circuit for the composed search problem}

We now build the circuit that will be embedded into the QLSP. Let $D$ be a power of two. For a valid input $y^i\in\{0,1\}^D$ to $\LR_D$, let $\sigma_i$ be the unique marked position. Define the cyclic shift
\begin{equation}
  P_{\sigma_i}\ket j
  =
  \ket{j+\sigma_i\bmod D}.
\end{equation}
Its matrix entries are
\begin{equation}
  (P_{\sigma_i})_{rk}
  =
  y^i_{r-k\bmod D},
  \qquad
  (P_{\sigma_i}^{\dagger})_{rk}
  =
  y^i_{k-r\bmod D}.
  \label{eq:shift-entry-formulas}
\end{equation}
Thus a value query to $P_{\sigma_i}$ or $P_{\sigma_i}^{\dagger}$ can be simulated using one query to $y^i$.

The data Hilbert space is $\mathcal H_{\mathrm{data}} = \mathbb C^D\otimes\mathbb C^2$, where the first register is an address register and the second is an answer qubit. For $j\in\{0,\ldots,D-1\}$, define
\begin{equation}
  \operatorname{msb}(j)
  =
  \begin{cases}
    0, & 0\le j<D/2,\\
    1, & D/2\le j<D.
  \end{cases}
\end{equation}
Let $B$ be the fixed unitary
\begin{equation}
  B\ket j\ket a
  =
  \ket j\ket{a\oplus\operatorname{msb}(j)}.
  \label{eq:B-gate}
\end{equation}

For $y=(y^1,\ldots,y^R)$, define the circuit
\begin{equation}
  C_y
  =
  \prod_{i=1}^{R}
  \left[
    (P_{\sigma_i}^{\dagger}\otimes I_2)
    B
    (P_{\sigma_i}\otimes I_2)
  \right].
  \label{eq:exact-composed-circuit}
\end{equation}

\begin{lemma}[Exact circuit for $F_{R,D}$]
\label{lem:exact-circuit}
The circuit $C_y$ has length $3R$ and satisfies
\begin{equation}
  C_y\ket0\ket0
  =
  \ket0\ket{F_{R,D}(y)}.
\end{equation}
\end{lemma}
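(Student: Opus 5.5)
The proof is a direct induction on the number of blocks in the product defining $C_y$. We check what one block does to a state of the form $\ket0\ket a$ with $a\in\{0,1\}$.

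\textbf{Length.} Each factor $(P_{\sigma_i}^{\dagger}\otimes I_2)B(P_{\sigma_i}\otimes I_2)$ consists of three gates: $P_{\sigma_i}\otimes I_2$, $B$, and $P_{\sigma_i}^\dagger\otimes I_2$. There are $R$ factors, so the circuit has $3R$ gates. The order of the factors in the product is irrelevant for the length, and, as the next step shows, also for the output.

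\textbf{One block.} Fix $i$ and apply the block to $\ket0\ket a$:
\begin{equation}
  \ket0\ket a
  \;\xrightarrow{P_{\sigma_i}\otimes I_2}\;
  \ket{\sigma_i}\ket a
  \;\xrightarrow{B}\;
  \ket{\sigma_i}\ket{a\oplus\operatorname{msb}(\sigma_i)}
  \;\xrightarrow{P_{\sigma_i}^{\dagger}\otimes I_2}\;
  \ket0\ket{a\oplus\operatorname{msb}(\sigma_i)}.
\end{equation}
The first step uses $0+\sigma_i\bmod D=\sigma_i$. The last step uses that $P_{\sigma_i}^\dagger=P_{-\sigma_i}$ sends $\ket{\sigma_i}$ back to $\ket0$.

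By the definition of $\LR_D$, the unique marked position $\sigma_i$ lies in the left half exactly when $\LR_D(y^i)=0$. Hence $\operatorname{msb}(\sigma_i)=\LR_D(y^i)$. So each block maps $\ket0\ket a$ to $\ket0\ket{a\oplus\LR_D(y^i)}$: it returns the address register to $\ket0$ and XORs the answer bit with $\LR_D(y^i)$.

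\textbf{Induction.} The address register is restored to $\ket0$ after every block, and the blocks act on computational basis states. Applying the block computation $R$ times, starting from $\ket0\ket0$, therefore gives
\begin{equation}
  \ket0\ket{\LR_D(y^1)\oplus\cdots\oplus\LR_D(y^R)}=\ket0\ket{F_{R,D}(y)}
\end{equation}
by \cref{eq:FRD-definition}. Since XOR is commutative, the ordering convention of the product in \cref{eq:exact-composed-circuit} does not matter.

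There is no real obstacle here. The only point requiring care is the identity $\operatorname{msb}(\sigma_i)=\LR_D(y^i)$, which follows from matching the left/right-half convention of $\LR_D$ with the definition of $\operatorname{msb}$.
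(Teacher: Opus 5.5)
Your proof is correct and matches the paper's argument. You trace one block $(P_{\sigma_i}^{\dagger}\otimes I_2)B(P_{\sigma_i}\otimes I_2)$ on $\ket0\ket a$, note that it returns the address register to $\ket0$ while XORing $\operatorname{msb}(\sigma_i)=\LR_D(y^i)$ into the answer qubit, and iterate over the $R$ blocks. Your explicit $3R$ gate count and your remark that the order of the factors is immaterial are minor points the paper leaves implicit.
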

\begin{proof}
    For one block, starting from $\ket0\ket a$,
\begin{equation}
  (P_{\sigma_i}\otimes I_2)\ket0\ket a
  =
  \ket{\sigma_i}\ket a.
\end{equation}
Then
\begin{equation}
  B\ket{\sigma_i}\ket a
  =
  \ket{\sigma_i}\ket{a\oplus\operatorname{msb}(\sigma_i)}
  =
  \ket{\sigma_i}\ket{a\oplus\LR_D(y^i)}.
\end{equation}
Finally,
\begin{equation}
  (P_{\sigma_i}^{\dagger}\otimes I_2)
  \ket{\sigma_i}\ket{a\oplus\LR_D(y^i)}
  =
  \ket0\ket{a\oplus\LR_D(y^i)}.
\end{equation}
Repeating this for $i=1,\ldots,R$ leaves the address register in $\ket0$ and the answer qubit in the XOR of the $R$ left-right bits. Therefore
\begin{equation}
  C_y\ket0\ket0
  =
  \ket0\ket{F_{R,D}(y)}.
\end{equation}
\end{proof}

\subsection{Lower bound with joint complexity}

We now have all the ingredients to prove the lower bound in \Cref{thm:main_lower_bound}. 
Let $d$ be sufficiently large, and let $D$ be the largest power of two satisfying
\begin{equation}
  D+1\le d.
  \label{eq:D-choice}
\end{equation}
Then $D=\Theta(d)$. Let
\begin{equation}
  R
  =
  \left\lfloor
  \frac{\kappa\ln(1/\eps)}{100}
  \right\rfloor.
  \label{eq:R-choice}
\end{equation}
Consider the regime in which $R\ge1$ and
\begin{equation}
  R
  \ge
  \frac{\kappa\ln(1/\eps)}{200}.
  \label{eq:R-lower-bound}
\end{equation}

Let $y=(y^1,\ldots,y^R)$ be an input to $F_{R,D}$. For each block $y^i$, let $\sigma_i$ be the unique marked position, and define the circuit $C_y$ from \cref{eq:exact-composed-circuit}. By \Cref{lem:exact-circuit}, this circuit has length $L=3R$, and satisfies $C_y\ket0\ket0 = \ket0\ket{F_{R,D}(y)}$.

Apply the circuit-to-matrix construction to $C_y$. This gives
\begin{equation}
  A_y
  =
  \frac12\left(I-\alpha U_y\right),
  \qquad
  \alpha
  =
  1-\frac2\kappa,
  \label{eq:Ay-definition}
\end{equation}
with input state $\ket b = \ket1_{\mathrm{clock}}\ket0_{\mathrm{address}}\ket0_{\mathrm{answer}}$. By \Cref{lem:circuit-to-matrix}, $\norm{A_y}\le1$ and $\norm{A_y^{-1}}\le\kappa$.
Row and column location lists consist of the diagonal position and $D$ possible gate-entry positions, padding with zeros as needed; they are independent of $y$. By \cref{eq:shift-entry-formulas}, each value query costs $O(1)$ Boolean queries.

Suppose there is a QLSP algorithm using $q$ queries that solves every promised instance of this form. Conditioned on its success flag, it produces a state $\rho_y$ satisfying
\begin{equation}
  \Dtr\left(\rho_y,\ket{x_y}\!\bra{x_y}\right)
  \le
  \eps, \qquad \textrm{where} \quad \ket{x_y} = \frac{A_y^{-1}\ket{b}}{\norm{A_y^{-1}\ket{b}}}.
  \label{eq:rho-close}
\end{equation}
We convert the solver's output into a Boolean answer as follows. If the flag indicates failure, output a uniformly random bit. Otherwise, measure the clock register and the answer qubit of $\rho_y$. If the clock lies in the middle interval $\{L+1,\ldots,2L\}$, then output the measured answer qubit. Otherwise output a uniformly random bit.

First analyze this procedure on the exact state $\ket {x_y}$. By \Cref{lem:circuit-to-matrix}, the probability of the middle clock event is at least
\begin{equation}
  p
  \ge
  \frac{\alpha^{4L}}3
  =
  \frac{\alpha^{12R}}3.
  \label{eq:p-alpha-12R}
\end{equation}
For sufficiently large $\kappa$,
\begin{equation}
  \ln\left(1-\frac2\kappa\right)
  \ge
  -\frac4\kappa.
\end{equation}
Using $R\le\kappa\ln(1/\eps)/100$, we obtain
\begin{equation}
  \alpha^{12R}
  =
  \left(1-\frac2\kappa\right)^{12R}
  \ge
  \exp\left(-\frac{48R}{\kappa}\right)
  \ge
  \eps^{12/25}.
\end{equation}
Since $12/25<1/2$ and $0<\eps<1$,
\begin{equation}
  \eps^{12/25}
  \ge
  \sqrt\eps.
\end{equation}
Thus
\begin{equation}
  p
  \ge
  \frac{\sqrt\eps}{3}.
  \label{eq:p-sqrt-eps}
\end{equation}

Conditioned on the middle clock event, the data register is $C_y\ket0\ket0$, whose answer qubit is exactly $F_{R,D}(y)$. Therefore, on the exact state, the procedure succeeds with probability at least
\begin{equation}
  p\cdot1+(1-p)\frac12
  =
  \frac12+\frac p2
  \ge
  \frac12+\frac{\sqrt\eps}{6}.
  \label{eq:ideal-success}
\end{equation}
Replacing the exact state by $\rho_y$ changes the success probability of this two-outcome measurement by at most the trace distance in \cref{eq:rho-close}. Hence
\begin{equation}
  \Pr[\mathrm{correct}\mid\text{QLSP success}]
  \ge
  \frac12+\frac{\sqrt\eps}{6}-\eps.
\end{equation}
Since the solver succeeds with probability at least $1/2$ and we output a random bit on failure, for sufficiently small $\eps$ the overall success probability is at least
\begin{equation}
  \frac12+\frac{\sqrt\eps}{20}.
  \label{eq:boolean-advantage}
\end{equation}

Choose $c=e^{-100}$. Using \cref{eq:R-lower-bound}, $c^R = e^{-100R} \le e^{-(\kappa/2)\ln(1/\eps)} = \eps^{\kappa/2}$. For sufficiently large $\kappa$ and sufficiently small $\eps$,
\begin{equation}
  c^R
  \le
  \frac{\sqrt\eps}{20}.
\end{equation}
Thus the simulated Boolean algorithm computes $F_{R,D}$ with error at most $1/2-c^R$. By \Cref{lem:xor-lower-bound},
\begin{equation}
  O(q)
  \ge
  Q_{1/2-c^R}(F_{R,D})
  =
  \Omega(R\sqrt D).
\end{equation}
Finally, by \cref{eq:D-choice,eq:R-choice},
\begin{equation}
  R=\Theta\!\left(\kappa\log(1/\eps)\right),
  \qquad
  D=\Theta(d).
\end{equation}
Therefore
\begin{equation}
  q
  =
  \Omega\!\left(\kappa\sqrt d\log(1/\eps)\right).
\end{equation}
Although the matrix $A_y$ need not be Hermitian, the result extends to a Hermitian QLSP instance via a Hermitian dilation, without altering the relevant solution state. This completes the proof of \Cref{thm:main_lower_bound}.

Combining \Cref{thm:upper-bound,thm:main_lower_bound} gives the optimal query complexity
\begin{equation}
  \Theta(\kappa\sqrt d\log(1/\eps)),
\end{equation}
which proves our main result in \Cref{thm:main}.

\section{Block encoding of the original matrix}
\label{sec:block-encoding-A}

The main construction in \Cref{thm:sparse-reduction} is designed to produce $A^{-1}\ket b$, but it also
yields a constant-error block encoding of a Hermitian sparse matrix with
$O(\sqrt d)$ queries. 

We first combine \Cref{thm:sparse-reduction}
with block-encoding inversion
of Gily\'en et
al.~\cite{GSLW19} 
to implement inverses of well-conditioned
sparse matrices. For this, let us first define an approximate block encoding.  A unitary $U_K$ is an
$(\alpha,\eta)$ block encoding of $K$ if
\begin{equation}
  \norm{
    K-
    \alpha(\bra{0^a}\otimes\id)U_K(\ket{0^a}\otimes\id)
  }
  \leq\eta.
  \label{eq:approximate-block-encoding}
\end{equation}

\begin{lemma}[Block encoding an inverse]
\label{lem:sparse-inverse-block-encoding}
Let $A$ be an invertible Hermitian $s$-sparse matrix with
$\norm A\leq1$ and $\norm{A^{-1}}\leq\kappa=O(1)$.
Given sparse-matrix oracle access to $A$, for every $\eta>0$
one can construct an $(O(1),\eta)$ block encoding of $A^{-1}$ using
\begin{equation}
  O(\sqrt s\log(1/\eta)) \ \text{queries.}
  \label{eq:sparse-inverse-block-encoding-cost}
\end{equation}
\end{lemma}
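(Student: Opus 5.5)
We plan to combine the exact $O(1)$-query block encoding of $M$ from \Cref{thm:sparse-reduction} with the block-encoding inversion (QSVT) procedure of Gily\'en et al.~\cite{GSLW19}, and then read off the bottom-right block of $M^{-1}$.

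First, apply \Cref{thm:sparse-reduction} with sparsity $s$. This gives an exact block encoding $U_M$ of $M$ with normalization $1$ using $O(1)$ queries. It also gives $\norm{M}\leq1$ and $\norm{M^{-1}}\leq5\kappa\sqrt s$. The matrix $M$ is not Hermitian, so we use the singular-value version of inversion. QSVT with an odd polynomial approximating $1/x$ on $[\delta,1]$, where $\delta=1/(5\kappa\sqrt s)$, yields a block encoding of $M^{-1}$ (more precisely, of the pseudo-inverse, which equals $M^{-1}$ since $M$ is invertible). Its normalization is $O(1/\delta)=O(\kappa\sqrt s)$, and its operator-norm error is $\eta'$. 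The construction uses $O(\delta^{-1}\log(1/\eta'))=O(\kappa\sqrt s\log(1/\eta'))$ applications of $U_M$ and $U_M^\dagger$. The dagger costs the same $O(1)$ queries, since $U_M$ is built from the oracles and their inverses.

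Second, extract the bottom-right block. By item 2 of \Cref{thm:sparse-reduction}, that block of $M^{-1}$ is exactly $\sqrt s\,A^{-1}$. We project onto the second block with an extra ancilla flag indicating the ``$\vec u$'' sector, namely $\ket{\mathsf{r}}$/$\ket{\tau}$ versus $\ket{\mathsf z}$/$\ket{\ell}$, on both input and output sides. Together with embedding the $N$-dimensional register into the $(N^2+N)$-dimensional one via fixed ancillas, this gives a block encoding of $\sqrt s A^{-1}$ with normalization $O(\kappa\sqrt s)$. Equivalently, it is a block encoding of $A^{-1}$ with normalization $O(\kappa\sqrt s)/\sqrt s=O(\kappa)=O(1)$. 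Restricting to a block does not increase operator-norm error. The error on $A^{-1}$ is therefore the QSVT error divided by $\sqrt s$, and in any case at most $O(\kappa)\eta'$ after rescaling. Choosing $\eta'=\Theta(\eta/\kappa)$ gives error $\eta$ at total cost $O(\kappa\sqrt s\log(\kappa/\eta))=O(\sqrt s\log(1/\eta))$ for constant $\kappa$.

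The main obstacle is bookkeeping the normalizations so the $\sqrt s$ factors cancel exactly. The QSVT normalization scales like $\norm{M^{-1}}\sim\kappa\sqrt s$, and this must be offset by the $\sqrt s$ prefactor in the bottom-right block. We must also check that the error conventions in \cref{eq:approximate-block-encoding} compose correctly under the rescaling. A secondary point is confirming that the inversion theorem applies to non-Hermitian $M$, by inverting singular values with the adjoint arrangement. We would try the GSLW matrix-inversion corollary directly, stated for general invertible matrices with $\norm{M^{-1}}\leq 1/\delta$.
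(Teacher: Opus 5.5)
Your proposal is correct and follows the paper's proof essentially step for step. You apply \Cref{thm:sparse-reduction} to get an $O(1)$-query block encoding of $M$ with $\norm{M^{-1}}\leq 5\kappa\sqrt s$, invert it with the GSLW singular-value (pseudo-)inversion using $O(\kappa\sqrt s\log(1/\eta'))$ calls, and select the bottom-right block $\sqrt s\,A^{-1}$ so that its $\sqrt s$ cancels the inversion normalization, which the paper writes concretely as $\frac{3}{4B}\sqrt s\,A^{-1}=\frac{3}{20\kappa}A^{-1}$ with $\delta=3\eta/(20\kappa)$.
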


\begin{proof}
\Cref{thm:sparse-reduction} gives an exact block encoding of a
contraction $M$ using $O(1)$ queries, with
\begin{equation}
  \norm{M^{-1}}\leq B\defeq5\kappa\sqrt s,
  \label{eq:appendix-main-theorem-output}
\end{equation}
and the bottom-right block of $M^{-1}$ is $\sqrt s\,A^{-1}$.

We now turn the block encoding of $M$ into one of a scaled $M^{-1}$.
The block-encoding inversion procedure of Gily\'en et
al.~\cite[Theorem~41 and Corollary~69]{GSLW19} uses the bound
$\norm{M^{-1}}\leq B$ to construct a unitary whose encoded block
approximates $\frac{3}{4B}M^{-1}$ to operator-norm error $\delta$, using
\begin{equation}
  O\left(
    B\log(1/\delta)
  \right)
  =
  O\left(
    \sqrt s\log(1/\delta)
  \right)
  \label{eq:appendix-qsvt-cost}
\end{equation}
uses of the block encoding of $M$. Selecting the bottom-right block of
the encoded matrix requires no further queries and gives an approximation to
\begin{equation}
  \frac{3}{4B}\sqrt s\,A^{-1}
  =
  \frac{3}{20\kappa}A^{-1}.
  \label{eq:appendix-cancellation}
\end{equation}
Thus the factor $\sqrt s$ in the inverse block cancels the sparsity
dependence of the inversion normalization. Taking
$\delta=3\eta/(20\kappa)$ gives an $(O(1),\eta)$ block encoding of
$A^{-1}$ using $O(\sqrt s\log(1/\eta))$ sparse-matrix oracle queries.
\end{proof}

\Cref{lem:sparse-inverse-block-encoding} lets us block encode inverses of
well-conditioned sparse matrices. We can then use it to block encode a
Hermitian sparse matrix $H$ itself, even when $H$ is singular or 
poorly conditioned. The idea is to realize $H$ as a block
of the inverse of a larger, well-conditioned sparse matrix.

\sqrtdblockencoding*

\begin{proof}
To apply \Cref{lem:sparse-inverse-block-encoding}, we seek a
well-conditioned sparse linear system whose solution contains $H\vec b$. Introduce
variables $\vec u,\vec v$ and impose $\vec v=\vec b$ and
$\vec u-H\vec v=0$. Equivalently,
\begin{equation}
  \underbrace{
  \begin{pmatrix}
    0&\id\\
    \id&-H
  \end{pmatrix}}_{=\,G_H}
  \begin{pmatrix}\vec u\\\vec v\end{pmatrix}
  =
  \begin{pmatrix}\vec b\\0\end{pmatrix}.
  \label{eq:GH-definition}
\end{equation}
The solution has $\vec u=H\vec b$, and indeed
\begin{equation}
  G_H^{-1}
  =
  \begin{pmatrix}
    H&\id\\
    \id&0
  \end{pmatrix}.
  \label{eq:GH-inverse}
\end{equation}
The matrix $G_H$ is Hermitian and $(d+1)$-sparse, with oracles
implementable using $O(1)$ queries to those for $H$. Both $G_H$ and
$G_H^{-1}$ have norm at most $1+\norm H\leq2$. Hence
$K_H\defeq G_H/2$ is a contraction with $\norm{K_H^{-1}}\leq4$.
Applying \Cref{lem:sparse-inverse-block-encoding} to $K_H$
therefore uses $O(\sqrt d\log(1/\eta))$ queries. Since
$K_H^{-1}=2G_H^{-1}$ has upper-left block $2H$, selecting this block
and halving the normalization gives the desired $(O(1),\eta)$ block
encoding of $H$.
\end{proof}

We also obtain the optimal bounded-error query complexity for implementing
a unitary from its entries, answering the question of Berry and
Childs~\cite{BC12}.

\blackboxunitary*

\begin{proof}
The Hermitian dilation
\begin{equation}
  H_U=\begin{pmatrix}0&U\\U^\dagger&0\end{pmatrix}
  \qquad\text{satisfies}\qquad
  H_U^{-1}=H_U,\quad \norm{H_U}=1.
  \label{eq:unitary-dilation}
\end{equation}
It is $N$-sparse, with location lists given by the opposite block and
entry oracles implementable using $O(1)$ queries to those for $U$.
\Cref{lem:sparse-inverse-block-encoding}, with $\kappa=1$, therefore
gives an $(O(1),\eta)$ block encoding of $H_U$ using
$O(\sqrt N\log(1/\eta))$ queries. Selecting its upper-right block
gives the same guarantee for $U$.

Robust oblivious amplitude amplification~\cite[Lemma~6]{BCK15}
uses $O(1)$ calls to this block encoding and its inverse, enlarging the
constant normalization if necessary, to obtain an encoded block within
$O(\eta)$ of $U$. Since $U$ is unitary, this implements $U$ with
state error $O(\sqrt\eta)$. Taking $\eta=c\eps^2$ for a sufficiently
small constant $c>0$ proves the upper bound. The $\Omega(\sqrt N)$
lower bound follows from unstructured search~\cite{BC12}.
\end{proof}

\bibliographystyle{alphaurl}
\bibliography{citations}

\end{document}